 \definecolor{BLACK}{gray}{0}
 \definecolor{WHITE}{gray}{1}
 \definecolor{RED}{rgb}{1,0,0}
 \definecolor{GREEN}{rgb}{0,1,0}
 \definecolor{BLUE}{rgb}{0,0,1}
 \definecolor{CYAN}{cmyk}{1,0,0,0}
 \definecolor{MAGENTA}{cmyk}{0,1,0,0}
 \definecolor{YELLOW}{cmyk}{0,0,1,0}
\theoremstyle{plain}
\newtheorem{thm}{\protect\theoremname}
\newcommand\eqdef{\mathrel{\overset{\makebox[0pt]{\mbox{\normalfont\tiny def}}}{=}}}
\providecommand{\theoremname}{Theorem}
\begin{document}

\title{Quantum Superpositions Cannot be Epistemic}

\author{John-Mark A. Allen}

\email{john-mark.allen@cs.ox.ac.uk}

\affiliation{Department of Computer Science, University of Oxford, Wolfson Building,
Parks Road, Oxford, OX1 3QD, United Kingdom. }

\date{May 2015}
\begin{abstract}
Quantum superposition states are behind many of the curious phenomena
exhibited by quantum systems, including Bell non-locality, quantum
interference, quantum computational speed-up, and the measurement
problem. At the same time, many qualitative properties of quantum
superpositions can also be observed in classical probability distributions
leading to a suspicion that superpositions may be explicable as probability
distributions over less problematic states; that is, a suspicion that
superpositions are \emph{epistemic}. Here, it is proved that, for
any quantum system of dimension $d>3$, this cannot be the case for
almost all superpositions. Equivalently, any underlying ontology must
contain ontic superposition states. A related question concerns the
more general possibility that some pairs of non-orthogonal quantum
states $|\psi\rangle,|\phi\rangle$ could be ontologically indistinct
(there are ontological states which fail to distinguish between these
quantum states). A similar method proves that if $|\langle\phi|\psi\rangle|^{2}\in(0,\frac{1}{4})$
then $|\psi\rangle,|\phi\rangle$ must approach ontological distinctness
as $d\rightarrow\infty$. The robustness of these results to small
experimental error is also discussed.
\end{abstract}
\maketitle

\section{Introduction\label{sec:Introduction}}

Is the quantum state \emph{ontic} (a state of reality) or \emph{epistemic}
(a state of knowledge)? This, rather old, question is the subject
of the now-famous PBR theorem \citep{Pusey12}, which proves that
the quantum state of a system is ontic given reasonable assumptions
about the ontic structure of multi-partite systems. Whilst these assumptions
appear weak and well-motivated, they have also been frequently challenged
and, as a result, many recent papers have sought to address the onticity
of the quantum state using only single-system arguments \citep{Maroney12a,Leifer13b,Barrett14,Leifer14a,Branciard14,Ballentine14}.
These theorems and discussions are reviewed in Ref. \citep{Leifer14b}.

All of this work addresses the \emph{epistemic realist}, who assumes
that a physical system is always in some definite \emph{ontic state}
(realist) and hopes that uncertainty about the ontic state might explain
certain features of quantum systems (epistemic). The features that
the epistemic realist might like to explain in this way include: indistinguishability
of non-orthogonal states, no-cloning, stochasticity of measurement
outcomes, and the exponential increase in state complexity with increasing
system size \citep{Spekkens07}. Preparing some quantum state $|\psi\rangle$
must result in some ontic state $\lambda$ obtaining, so some probability
distribution, called a \emph{preparation distribution}, must describe
the probabilities with which each $\lambda$ obtains in that preparation.
In general, preparation distributions for some pair of non-orthogonal
quantum states might overlap---there might be ontic states accessible
by preparing either of those quantum states. The main strategy of
the single-system ontology arguments is to prove that, in order to
preserve quantum predictions, these overlaps must be unreasonably
small---too small to explain any quantum features.

This paper initially concentrates on quantum \emph{superposition states}
defined with respect to some specified orthonormal basis (ONB). Superpositions
are behind quantum interference, the uncertainty principle, wave-particle
duality, entanglement, Bell non-locality \citep{Bell87}, and the
probable increased computational power of quantum theory \citep{Jozsa03}.
Perhaps most alarmingly, superpositions give rise to the measurement
problem, so captivatingly illustrated by the ``Schrödinger's cat''
thought experiment.

Schrödinger's cat is set up to be in a superposition of $|{\rm dead}\rangle$
and $|{\rm alive}\rangle$ quantum states. The epistemic realist (and
probably the cat) would ideally prefer the ontic state of the cat
to only ever be one of ``dead'' or ``alive'' (\emph{viz.,} only
in ontic states accessible to either the $|{\rm dead}\rangle$ or
$|{\rm alive}\rangle$ quantum states). In that case, the cat's apparent
quantum superposition would be \emph{epistemic}---there would be nothing
ontic about the superposition state. Conversely, if there are ontic
states which can only obtain when the cat is in a quantum superposition
(and never when the cat is in either quantum $|{\rm alive}\rangle$
or $|{\rm dead}\rangle$ states) then the superposition is unambiguously
\emph{ontic}: there are ontological features which correspond to that
superposition but not to non-superpositions, so that superposition
is real.

Obviously quantum superpositions are different from proper mixtures
of basis states. The question here is rather whether quantum superpositions
over basis states can be understood as probability distributions over
some subset of underlying ontic states, where each such ontic state
is also accessible by preparing some basis state.

The epistemic realist perspective on the foundations of quantum theory
is not only philosophically attractive but also appears to be tenable.
Theories in which the quantum state is explained in an epistemically
realist manner have been demonstrated to reproduce interesting subsets
of quantum theory which include characteristically quantum features
\citep{Spekkens14,Spekkens07,Bartlett12,Leifer15}. Moreover, they
include theories where superpositions are not ontic in the sense described
above. The question of the reality of superpositions in quantum theory
is, therefore, very much open.

For example, in Spekkens' toy theory \citep{Spekkens07} the ``toy-bit''
reproduces a subset of qubit behaviour. A toy bit consists of four
ontic states, \textbf{$a,b,c,d$}, and four possible preparations,
$|0),|1),|+),|-)$, which are analogous to the correspondingly named
qubit states. Each preparation corresponds to a uniform probabilistic
distribution over exactly two ontic states: $|0)$ is a distribution
over $a$ and $b$; $|1)$ a distribution over $c$ and $d$; $|+)$
over $a,c$; and $|-)$ over $b,d$. Full details of how these states
behave and how they reproduce qubit phenomena is described in Ref.~\citep{Spekkens07}.
For the purposes here, it suffices to notice that all ontic states
corresponding to the superpositions states $|+)$ and $|-)$ are also
ontic states corresponding to either $|0)$ or $|1)$---this toy theory
has nothing on the ontological level which can be identified as a
superposition so the superpositions are epistemic. Such models, therefore,
lend credibility to the idea that quantum superpositions themselves
might, in a similar way, fail to have an ontological basis.

Previous single-system theorems that bound ontic overlaps to argue
for the onticity of the quantum state \citep{Maroney12a,Leifer13b,Barrett14,Leifer14a,Branciard14}
share at least these shortcomings: (i) they prove that there exists
some pair of quantum states (taken from a specific set) with bounded
overlap, rather than bounding overlaps between arbitrary quantum states
and (ii) when the overlaps are proved to approach zero in some limit,
the quantum states involved also approach orthogonality in that same
limit \citep{Leifer14b}. 

In this paper it is proved that, for a $d>3$ dimensional quantum
system, almost all quantum superpositions with respect to any given
ONB must be ontic. A very similar argument can be used to obtain a
general bound on ontic overlaps for $d>3$, which addresses the above
shortcomings. Finally, the noise tolerance of these results is discussed.

\section{Ontological models}

The appropriate framework for discussing epistemic realism is that
of ontological models \citep{Harrigan10,Harrigan07,Leifer14b}. It
is flexible enough for most realist approaches to quantum ontology
to be cast as ontological models \citep{Ballentine14} including,
but not limited to, Bohmian theories, spontaneous collapse theories,
and naïve wave-function-realist theories.\footnote{Conversely, ontological models are irrelevant for any ``anti-realist'',
``instrumentalist'', ``positivist'', or ``Copenhagen-like''
theories denying the existence of an underlying ontology. For example,
quantum-Bayesian theories are exempt from ontological model analysis.}

An ontological model of a system has a set $\Lambda$ of \emph{ontic
states $\lambda\in\Lambda$}. The ontic state which the system occupies
dictates the properties and behaviour of the system, regardless of
any other theory (such as quantum theory) which may be used to describe
it.

An ontological model for a quantum system is constrained by the fact
that it must reproduce the predictions of quantum theory (at least
where they are empirically verifiable). Recall that a quantum system
is described with a $d$-dimensional complex Hilbert space $\mathcal{H}$
with $\mathcal{P}(\mathcal{H})\eqdef\{|\psi\rangle\in\mathcal{H}\,:\,\left\Vert \psi\right\Vert =1,|\psi\rangle\sim\mathrm{e}^{i\theta}|\psi\rangle\}$
as the set of distinct pure quantum states.\footnote{For simplicity, take $d<\infty$.}
Quantum superpositions are defined with respect to some ONB $\mathcal{B}$
of $\mathcal{H}$ and are simply those $|\psi\rangle\in\mathcal{P}(\mathcal{H})$
for which $|\psi\rangle\not\in\mathcal{B}$.

The preparation distributions\footnote{In fact, this treatment of ontological models is not as general as
it should be. Reference~\citep{Leifer14b} notes that, instead of
probability \emph{distributions}, one should consider general probability
\emph{measures} $\mu$ over a measurable space $(\Lambda,\Sigma)$
and ontological models can be re-formulated measure-theoretically.
The presentation here implicitly, and problematically, assumes some
canonical measure $\mathrm{d}\lambda$ over $\Lambda$ with respect
to which all of the probability distributions can be defined. It is
possible to derive the results presented here in the more rigorous
formulation, but doing so would be at the expense of conceptual clarity.
In light of this simplification some of the proofs presented here
will also lack in mathematical rigour at certain steps, though more
thorough versions of the same results can be derived.} $\mu(\lambda)$ for some state $|\psi\rangle\in\mathcal{P}(\mathcal{H})$
form a set $\Delta_{|\psi\rangle}$ since different ways of preparing
the same $|\psi\rangle$ may result in different distributions $\mu\in\Delta_{|\psi\rangle}$.
If $\Delta_{|\psi\rangle}$ is a singleton for every $|\psi\rangle\in\mathcal{P}(\mathcal{H})$,
then the ontological model is \emph{preparation non-contextual}\footnote{Preparation non-contextuality for pure states is often implicitly
assumed because it rarely affects arguments \citep{Leifer14b}. Rather,
preparation contextuality for \emph{mixed }quantum states is more
often discussed \citep{Spekkens05}. However, explicit preparation
contextuality for pure states will be needed here.}\emph{ }for pure states (otherwise, it is \emph{preparation contextual}).
Let $\Lambda_{\mu}\eqdef\{\lambda\in\Lambda\,:\,\mu(\lambda)>0\}$
be the support of the distribution $\mu$.

A measurement $M$ of a quantum system can be represented as a set
of outcomes: either vectors of some ONB $\mathcal{B}^{\prime}$ (for
an ONB measurement) or POVM elements (for a general POVM measurement).
An ontological model assigns a set $\Xi_{M}$ of conditional probability
distributions, called \emph{response functions} $\mathbb{P}_{M}\in\Xi_{M}$,
to $M$. A method for performing measurement $M$ selects some $\mathbb{P}_{M}\in\Xi_{M}$
which gives the probability of obtaining outcome $E\in M$ conditional
on the ontic state of the system. A preparation of $|\psi\rangle$
via $\mu\in\Delta_{|\psi\rangle}$ followed by a measurement $M$
via $\mathbb{P}_{M}\in\Xi_{M}$, therefore, returns outcome $E\in M$
with probability 
\begin{equation}
\mathbb{P}_{M}(E\,|\,\mu)=\int_{\Lambda}\mathrm{d}\lambda\,\mu(\lambda)\mathbb{P}_{M}(E\,|\,\lambda).
\end{equation}

Transformations acting on a system must correspond to stochastic maps
on its space of ontic states $\Lambda$. An ontological model assigns
a set $\Gamma_{U}$ of stochastic maps $\gamma$ to each unitary transformation
$U$ over $\mathcal{H}$. A method for performing $U$ selects some
$\gamma\in\Gamma_{U}$ which, given that the system is in ontic state
$\lambda^{\prime}$, describes a probability distribution $\gamma(\cdot|\lambda^{\prime})$,
so that the probably that $\lambda^{\prime}$ is mapped to $\lambda$
under this operation is $\gamma(\lambda|\lambda^{\prime})$. A preparation
of $|\psi\rangle$ via $\mu\in\Delta_{|\psi\rangle}$ followed by
a transformation $U$ via $\gamma\in\Gamma_{U}$ results in an ontic
state distributed according to the distribution $\nu$, given by
\begin{equation}
\nu(\lambda)=\int_{\Lambda}\mathrm{d}\lambda^{\prime}\mu(\lambda^{\prime})\gamma(\lambda\,|\,\lambda^{\prime}),\quad\forall\lambda\in\Lambda.\label{eq:ontological-transformation}
\end{equation}
It is required that $\nu\in\Delta_{U|\psi\rangle}$, since this an
example of a procedure preparing the quantum state $U|\psi\rangle$.

For now, assume that measurement statistics predicted by quantum theory
are exactly correct, so valid ontological models for quantum systems
must reproduce them. Therefore, for every $|\psi\rangle\in\mathcal{P}(\mathcal{H})$,
every unitary $U$ over $\mathcal{H}$, and every measurement $M$,
any choices of preparation $\mu\in\Delta_{|\psi\rangle}$, stochastic
map $\gamma\in\Gamma_{U}$, and response function $\mathbb{P}_{M}\in\Xi_{M}$,
must satisfy 
\begin{equation}
\langle\psi|U^{\dagger}EU|\psi\rangle=\int_{\Lambda}\mathrm{d}\lambda\int_{\Lambda}\mathrm{d}\lambda^{\prime}\mu(\lambda^{\prime})\gamma(\lambda|\lambda^{\prime})\mathbb{P}_{M}(E\,|\,\lambda),\quad\forall E\in M.\label{eq:exact-quantum-statistics}
\end{equation}

It shall be useful to consider the \emph{stabiliser subgroups} of
unitaries $\mathcal{S}_{|\psi\rangle}\eqdef\{U\,:\,U|\psi\rangle=|\psi\rangle\}$
for each $|\psi\rangle\in\mathcal{P}(\mathcal{H})$. In particular,
an ontological model is \emph{preparation non-contextual with respect
to stabiliser unitaries of $|\psi\rangle$} if and only if for every
$\mu\in\Delta_{|\psi\rangle}$, $U\in\mathcal{S}_{|\psi\rangle}$,
and $\gamma\in\Gamma_{U}$ the action of $\gamma$, according to Eq.~(\ref{eq:ontological-transformation}),
leaves the preparation distribution $\mu$ unaffected (that is, $\nu$
in Eq.~(\ref{eq:ontological-transformation}) would be equal to $\mu$).

\section{Measuring overlaps}

One way to quantify the overlap between preparation distributions
is the \emph{asymmetric overlap} $\varpi(|\phi\rangle|\mu)$ \citep{Maroney12a,Leifer13b,Ballentine14},
defined as the probability of obtaining an ontic state $\lambda$
accessible from some preparation of $|\phi\rangle$ when sampling
from $\mu$. Formally, 
\begin{equation}
\varpi(|\phi\rangle\,|\,\mu)\eqdef\int_{\Lambda_{|\phi\rangle}}\mathrm{d}\lambda\,\mu(\lambda)\label{eq:asymmetric-overlap}
\end{equation}
where $\Lambda_{|\phi\rangle}\eqdef\cup_{\nu\in\Delta_{|\phi\rangle}}\Lambda_{\nu}$
is the total support of all possible preparations of $|\phi\rangle$.
By Eq.~(\ref{eq:exact-quantum-statistics}), the asymmetric overlap
must be upper bounded by the Born rule measurement probability (proof
in appendix~\ref{sec:Proofs})
\begin{equation}
\varpi(|\phi\rangle\,|\,\mu)\leq|\langle\phi|\psi\rangle|^{2},\quad\forall\mu\in\Delta_{|\psi\rangle}.\label{eq:asymmetric-trivial-bound}
\end{equation}
That is, the probability of obtaining a $\lambda$ compatible with
$|\phi\rangle$ when preparing $|\psi\rangle$ cannot exceed the probability
of getting the measurement outcome $|\phi\rangle$ having prepared
$|\psi\rangle$.

This quantifies overlaps between pairs of quantum states, but what
of multi-partite overlaps? The asymmetric multi-partite overlap $\varpi(|0\rangle,|\phi\rangle,...\,|\,\mu)$
acts like the union of the bipartite overlaps $\varpi(|0\rangle|\mu)$,
$\varpi(|\phi\rangle|\mu)$, etc. It is defined as the probability
of obtaining a $\lambda\in\Lambda_{|0\rangle}\cup\Lambda_{|\phi\rangle}\cup...$
when sampling from $\mu$. Formally,
\begin{equation}
\varpi(|0\rangle,|\phi\rangle,...\,|\,\mu)\eqdef\int_{\Lambda_{|0\rangle}\cup\Lambda_{|\phi\rangle}\cup...}\mathrm{d}\lambda\,\mu(\lambda).\label{eq:multipartite-asymmetric}
\end{equation}
From Eqs.~(\ref{eq:asymmetric-overlap},\ref{eq:multipartite-asymmetric})
and Boole's inequality, it is clear that 
\begin{equation}
\varpi(|0\rangle,|\phi\rangle,...\,|\,\mu)\leq\varpi(|0\rangle\,|\,\mu)+\varpi(|\phi\rangle\,|\,\mu)+...\label{eq:multipartite-anti-distinguishable-bound}
\end{equation}

Quantum states are only perfectly distinguishable if they are mutually
orthogonal. Distinguishable states must be ontologically distinct
(their preparation distributions cannot overlap) in order to satisfy
Eq.~(\ref{eq:exact-quantum-statistics}). The opposite concept of
\emph{anti-distinguishability} is much more useful in discussions
of ontic overlaps \citep{Leifer14b}. A set $\{|\psi\rangle,|\phi\rangle,...\}\subset\mathcal{P}(\mathcal{H})$
is anti-distinguishable if and only if there exists a measurement
$M=\{E_{\text{\textlnot}\psi},E_{\text{\textlnot}\phi},...\}$ such
that 
\begin{equation}
\langle\psi|E_{\text{\textlnot}\psi}|\psi\rangle=\langle\phi|E_{\text{\textlnot}\phi}|\phi\rangle=...=0,\label{eq:anti-distinguishing-measurement}
\end{equation}
\emph{i.e.} the measurement can tell, with certainty, one state from
the set that was \emph{not} prepared. It has been proven \citep{Caves02,Barrett14}
that if some inner products $a=|\langle\phi|\psi\rangle|^{2}$, $b=|\langle0|\psi\rangle|^{2}$,
$c=|\langle0|\phi\rangle|^{2}$ satisfy
\begin{equation}
a+b+c<1,\quad(1-a-b-c)^{2}\geq4abc,\label{eq:antidistinguishing}
\end{equation}
then the triple $\{|\psi\rangle,|\phi\rangle,|0\rangle\}$ must be
anti-distinguishable. Anti-distinguishable triples $\{|\psi\rangle,|\phi\rangle,|0\rangle\}$
are useful because $\Lambda_{|\psi\rangle}\cap\Lambda_{|\phi\rangle}\cap\Lambda_{|0\rangle}=\emptyset$
and therefore $\varpi(|0\rangle,|\phi\rangle|\mu)=\varpi(|0\rangle|\mu)+\varpi(|\phi\rangle|\mu)$
for all $\mu\in\Delta_{|\psi\rangle}$, as proved in appendix~\ref{sec:Proofs}.

\section{Quantum superpositions are real}

Define quantum superpositions with respect to some ONB $\mathcal{B}$
and consider any superposition state $|\psi\rangle\not\in\mathcal{B}$.
If every ontic state accessible by preparing any $\mu\in\Delta_{|\psi\rangle}$
is also accessible by preparing some $|i\rangle\in\mathcal{B}$, then
$|\psi\rangle$ has no ontology independent of $\mathcal{B}$ in the
ontological model. Such a $|\psi\rangle$ is called an \emph{epistemic
}or \emph{statistical superposition }and must satisfy
\begin{eqnarray}
\sum_{|i\rangle\in\mathcal{B}}\varpi(|i\rangle\,|\,\mu) & = & 1,\;\forall\mu\in\Delta_{|\psi\rangle},\;\text{or equivalently,}\\
\varpi(|i\rangle\,|\,\mu) & = & |\langle i|\psi\rangle|^{2},\quad\forall|i\rangle\in\mathcal{B},\mu\in\Delta_{|\psi\rangle}.\label{eq:epistemic-superposition}
\end{eqnarray}

The alternative occurs when there exists some subset of ontic states
$\lambda\in\Lambda_{\psi}^{\mathcal{B}}\subset\Lambda$ for which
$\mu(\lambda)>0$ for some $\mu\in\Delta_{|\psi\rangle}$, but $\nu(\lambda)=0$
for every $\nu\in\Delta_{|i\rangle\in\mathcal{B}}$. That is, the
ontic states in $\Lambda_{\psi}^{\mathcal{B}}$ are accessible by
preparing $|\psi\rangle$ but not by preparing any $|i\rangle\in\mathcal{B}$,
making $|\psi\rangle$ an \emph{ontic} or \emph{real superposition}.

From Eqs.~(\ref{eq:asymmetric-trivial-bound},\ref{eq:epistemic-superposition}),
a superposition $|\psi\rangle\not\in\mathcal{B}$ can only be epistemic
if the asymmetric overlap $\varpi(|i\rangle\,|\,\mu)$ is maximal
for every $\mu\in\Delta_{|\psi\rangle}$ and all $|i\rangle\in\mathcal{B}$.
Therefore, the statement that ``not every quantum superposition can
be epistemic'' is rather weak. A more interesting question is whether
an individual superposition state $|\psi\rangle\in\mathcal{B}$ can
be epistemic.
\begin{thm}
\label{thm:superpositions-are-real}Consider a quantum system of dimension
$d>3$ and define superpositions with respect to some ONB $\mathcal{B}$.
Almost all quantum superposition states $|\psi\rangle\not\in\mathcal{B}$
are ontic.\end{thm}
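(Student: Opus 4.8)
The plan is to show that if a superposition $|\psi\rangle\notin\mathcal{B}$ is epistemic, then the overlap bound \eqref{eq:asymmetric-trivial-bound} must be saturated for every basis element, and that this saturation is incompatible with the anti-distinguishability bound \eqref{eq:antidistinguishing} unless $|\psi\rangle$ lies in a measure-zero subset of $\mathcal{P}(\mathcal{H})$. First I would fix a $\mu\in\Delta_{|\psi\rangle}$ and suppose $|\psi\rangle$ is epistemic, so that \eqref{eq:epistemic-superposition} holds: $\varpi(|i\rangle\,|\,\mu)=|\langle i|\psi\rangle|^{2}$ for all $|i\rangle\in\mathcal{B}$. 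The key idea is to pick three states and apply the multipartite overlap machinery: take two basis vectors $|i\rangle,|j\rangle\in\mathcal{B}$ together with $|\psi\rangle$ itself. Since $d>3$, there is enough room in $\mathcal{H}$ to find a vector $|0\rangle$ (not necessarily in $\mathcal{B}$) such that the triple $\{|\psi\rangle,|i\rangle,|j\rangle\}$-style inner-product data, or rather the data for a well-chosen comparison triple, satisfies \eqref{eq:antidistinguishing}, forcing $\Lambda_{|0\rangle}$ to be disjoint from $\Lambda_{|i\rangle}\cap\Lambda_{|j\rangle}\cap\dots$ in the right way.

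More concretely, I would argue as follows. If $|\psi\rangle$ is epistemic then $\mu$ is supported entirely on $\bigcup_{|i\rangle\in\mathcal{B}}\Lambda_{|i\rangle}$, and moreover the asymmetric overlaps $\varpi(|i\rangle\,|\,\mu)$ add up to $1$ with no ``waste'' — so in particular, for any pair $|i\rangle,|j\rangle$, the bipartite overlap $\varpi(|i\rangle,|j\rangle\,|\,\mu)$ must equal $|\langle i|\psi\rangle|^{2}+|\langle j|\psi\rangle|^{2}$, meaning $\Lambda_{|i\rangle}$ and $\Lambda_{|j\rangle}$ carry disjoint $\mu$-mass. Now introduce an auxiliary state $|\phi\rangle$ in the span of $|i\rangle$ and $|j\rangle$ (or a slight perturbation thereof into the orthogonal complement, which is nonempty precisely because $d>3$ gives extra dimensions). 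The triple $\{|\psi\rangle,|i\rangle,|\phi\rangle\}$ (or $\{|\phi\rangle,|i\rangle,|j\rangle\}$) will be anti-distinguishable for a suitable choice of $|\phi\rangle$, via \eqref{eq:antidistinguishing}, which then forces $\varpi(|\phi\rangle\,|\,\mu)\le |\langle\phi|\psi\rangle|^2$ while simultaneously $\varpi(|\phi\rangle\,|\,\mu)$ must pick up contributions from both $\Lambda_{|i\rangle}$ and $\Lambda_{|j\rangle}$ (since $|\phi\rangle$ has support on both), giving a lower bound on $\varpi(|\phi\rangle\,|\,\mu)$ in terms of $|\langle i|\psi\rangle|^2$ and $|\langle j|\psi\rangle|^2$. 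Comparing the two bounds yields an algebraic constraint on the coefficients $\langle i|\psi\rangle$; since this must hold for all pairs $i,j$ and all admissible $|\phi\rangle$, I expect it to pin $|\psi\rangle$ down to a proper algebraic subvariety of $\mathcal{P}(\mathcal{H})$, hence a set of measure zero. Taking the complement over all such constraints (a countable or finite intersection of full-measure sets) gives ``almost all'' superpositions are ontic.

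The main obstacle will be step three: engineering the auxiliary state $|\phi\rangle$ so that the anti-distinguishability condition \eqref{eq:antidistinguishing} genuinely holds \emph{and} so that $|\phi\rangle$ has nonzero overlap with enough basis vectors to make the lower bound on $\varpi(|\phi\rangle\,|\,\mu)$ strong enough to contradict $|\langle\phi|\psi\rangle|^{2}$. This is where $d>3$ must enter essentially — with only three dimensions there is no slack to place $|\phi\rangle$ off the $|i\rangle,|j\rangle$ plane while keeping the three inner products $a,b,c$ in the regime \eqref{eq:antidistinguishing}, whereas for $d>3$ one can use a direction orthogonal to both $|i\rangle$ and $|j\rangle$ to tune $|\langle\phi|\psi\rangle|^2$ down into the anti-distinguishable regime. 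I would also need to handle preparation contextuality carefully: the argument must hold for \emph{every} $\mu\in\Delta_{|\psi\rangle}$ and use $\Lambda_{|i\rangle}=\bigcup_\nu \Lambda_\nu$, so I should phrase everything in terms of $\Lambda_{|i\rangle}$ rather than a fixed preparation, invoking the appendix lemma that anti-distinguishability of a triple makes the total supports $\Lambda$ mutually disjoint. Finally I would verify that the exceptional set is genuinely measure zero (not merely nowhere dense) by checking it is cut out by nontrivial real-analytic equations in the coefficients of $|\psi\rangle$.
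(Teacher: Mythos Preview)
Your proposal has a genuine gap at the step where you claim a \emph{lower} bound on $\varpi(|\phi\rangle\,|\,\mu)$. You write that ``$\varpi(|\phi\rangle\,|\,\mu)$ must pick up contributions from both $\Lambda_{|i\rangle}$ and $\Lambda_{|j\rangle}$ (since $|\phi\rangle$ has support on both)''. This does not follow. The quantum-mechanical fact that $|\phi\rangle$ is a superposition of $|i\rangle$ and $|j\rangle$ tells you nothing about how the ontic support $\Lambda_{|\phi\rangle}$ sits relative to $\Lambda_{|i\rangle}$ and $\Lambda_{|j\rangle}$; in a $\psi$-ontic model $\Lambda_{|\phi\rangle}$ is disjoint from both, and nothing you have assumed about $|\psi\rangle$ being epistemic with respect to $\mathcal{B}$ constrains the ontology of the auxiliary state $|\phi\rangle$. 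So you only have the upper bound $\varpi(|\phi\rangle\,|\,\mu)\le|\langle\phi|\psi\rangle|^{2}$ from Eq.~(\ref{eq:asymmetric-trivial-bound}), and no contradiction arises.

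The paper obtains the needed lower bound by a completely different mechanism that your outline misses: it chooses $|\phi\rangle$ to satisfy $|\langle\phi|\psi\rangle|^{2}=|\langle0|\psi\rangle|^{2}$, so that there is a stabiliser unitary $U\in\mathcal{S}_{|\psi\rangle}$ with $U|0\rangle=|\phi\rangle$, and then invokes the appendix lemma that unitaries never decrease asymmetric overlaps, giving $\varpi(|\phi\rangle\,|\,\mu)\ge\varpi(|0\rangle\,|\,\mu')$. The epistemic assumption then forces $\varpi(|0\rangle\,|\,\mu')=\alpha^{2}$, and that is where the lower bound comes from. The fourth dimension is used not to tune anti-distinguishability inner products per se, but to place a component of $|\phi\rangle$ along $|3'\rangle$ orthogonal to the span of $|\psi\rangle$, so that when one measures in the constructed basis $\mathcal{B}'$, ontic states in $\Lambda_{|\psi\rangle}\cap\Lambda_{|\phi\rangle}$ are forced to return outcomes $|0\rangle$ or $|1'\rangle$, yielding $\mathbb{P}_{M}(|0\rangle\vee|1'\rangle\,|\,\mu)\ge\varpi(|0\rangle\,|\,\mu)+\varpi(|\phi\rangle\,|\,\mu)\ge2\alpha^{2}$, which exceeds the quantum value $\alpha^{2}+2\alpha^{4}$. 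Finally, the measure-zero exceptional set is not an algebraic variety produced by many constraints as you conjecture; it is simply the set of exact $50{:}50$ superpositions of two basis elements, excluded only so that some $|0\rangle\in\mathcal{B}$ has $|\langle0|\psi\rangle|^{2}\in(0,\tfrac12)$.
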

\begin{proof}
Let $|\psi\rangle$ be an arbitrary superposition state $|\psi\rangle\not\in\mathcal{B}$
and assume only that $|\psi\rangle$ is not an exact 50:50 superposition
of two states in $\mathcal{B}$. This is true for almost all superpositions
and guarantees that there exists some $|0\rangle\in\mathcal{B}$ such
that $|\langle0|\psi\rangle|^{2}\in(0,\frac{1}{2})$.

Define an ONB $\mathcal{B}^{\prime}=\{|0\rangle\}\cup\{|i^{\prime}\rangle\}_{i=1}^{d-1}$
containing this $|0\rangle$ such that 
\begin{equation}
|\psi\rangle=\alpha|0\rangle+\beta|1^{\prime}\rangle+\gamma|2^{\prime}\rangle\label{eq:Psi-construction}
\end{equation}
where $\alpha\in\mathbb{R}$, $\alpha\in(0,\,1/\sqrt{2})$, and $\beta\eqdef\sqrt{2}\alpha^{2}$.
Such bases always exists since $|\langle0|\psi\rangle|^{2}=\alpha^{2}$
and $|\alpha|^{2}+|\beta|^{2}=\alpha^{2}(1+2\alpha^{2})<1$. With
respect to the same $\mathcal{B}^{\prime}$, define
\begin{equation}
|\phi\rangle\eqdef\delta|0\rangle+\eta|1^{\prime}\rangle+\kappa|3^{\prime}\rangle\label{eq:Phi-construction}
\end{equation}
where $\delta\eqdef1-2\alpha^{2}$, $\eta\eqdef\sqrt{2}\alpha$. This
can always be normalised because $|\delta|^{2}+|\eta|^{2}=(1-2\alpha^{2})^{2}+2\alpha^{2}<1$.

The above construction has been chosen such that:\end{proof}
\begin{itemize}
\item $|\langle0|\psi\rangle|^{2}=\alpha^{2}=|\langle\phi|\psi\rangle|^{2}$
so there exists a unitary $U\in\mathcal{S}_{|\psi\rangle}$ for which
$U|0\rangle=|\phi\rangle$; and
\item the inner products $|\langle0|\psi\rangle|^{2}$, $|\langle\phi|\psi\rangle|^{2}$,
$|\langle0|\phi\rangle|^{2}$ satisfy Eq.~(\ref{eq:antidistinguishing})
and therefore the triple $\{|\psi\rangle,|\phi\rangle,|0\rangle\}$
is anti-distinguishable.\end{itemize}
\begin{proof}
For any preparation distribution $\mu^{\prime}\in\Delta_{|\psi\rangle}$
of $|\psi\rangle$, consider $\varpi(|0\rangle|\mu^{\prime})$. For
any unitary $V$ and any corresponding $\gamma\in\Gamma_{V}$, $\mu^{\prime}$
is evolved to some $\mu\in\Delta_{V|\psi\rangle}$ as in Eq.~(\ref{eq:ontological-transformation}).
This operation cannot decrease the asymmetric overlap $\varpi(V|0\rangle|\mu)\geq\varpi(|0\rangle|\mu^{\prime})$
and, in particular, letting $V=U$ one finds 
\begin{equation}
\varpi(|\phi\rangle|\mu)\geq\varpi(|0\rangle|\mu^{\prime}).\label{eq:first-thm-monotonicity}
\end{equation}
A proof of this is provided in appendix~\ref{sec:Proofs}. Therefore,
there must exist preparation distributions $\mu,\mu^{\prime}\in\Delta_{|\psi\rangle}$
satisfying Eq.~(\ref{eq:first-thm-monotonicity}).

Assume towards a contradiction that $|\psi\rangle$ is an epistemic
superposition so that Eq.~(\ref{eq:epistemic-superposition}) holds
and, in particular, $\varpi(|0\rangle|\mu)=\varpi(|0\rangle|\mu^{\prime})=\alpha^{2}$.
By Eq.~(\ref{eq:first-thm-monotonicity}) it is therefore found that
\begin{equation}
\varpi(|\phi\rangle\,|\,\mu)\geq\varpi(|0\rangle\,|\,\mu).\label{eq:first-thm-phi-overlap}
\end{equation}

Consider, then, a preparation of the state $|\psi\rangle$ via $\mu$
followed by an ONB measurement $M$ in the $\mathcal{B}^{\prime}$
basis. Since $|\psi\rangle$ was prepared, $\lambda\in\Lambda_{|\psi\rangle}$
and the only possible measurement outcomes are $|0\rangle$, $|1^{\prime}\rangle$,
and $|2^{\prime}\rangle$. By Eq.~(\ref{eq:exact-quantum-statistics}),
almost all $\lambda\in\Lambda_{|0\rangle}$ must return the outcome
$|0\rangle$ with certainty. Similarly, almost all $\lambda\in\Lambda_{|\phi\rangle}$
can only return $|0\rangle$, $|1^{\prime}\rangle$, or $|3^{\prime}\rangle$
as the measurement outcome. Therefore, the probability of obtaining
outcomes $|0\rangle$ or $|1^{\prime}\rangle$ must be lower bounded
by the probability of obtaining a $\lambda\in\Lambda_{|0\rangle}\cup\Lambda_{|\phi\rangle}$;
formally,
\begin{eqnarray}
\mathbb{P}_{M}(|0\rangle\vee|1^{\prime}\rangle\,|\,\mu) & \geq & \varpi(|0\rangle,|\phi\rangle\,|\,\mu)=\varpi(|0\rangle\,|\,\mu)+\varpi(|\phi\rangle\,|\,\mu)\nonumber \\
 & \geq & 2\varpi(|0\rangle\,|\,\mu)\label{eq:thm-asymmetric-bound}
\end{eqnarray}
where the equality follows because $\{|0\rangle,|\psi\rangle,|\phi\rangle\}$
is anti-distinguishable and the final line follows from Eq.~(\ref{eq:first-thm-phi-overlap}),
which is found by assuming that $|\psi\rangle$ is an epistemic superposition.

In order to satisfy Eq.~(\ref{eq:exact-quantum-statistics}) 
\begin{equation}
\mathbb{P}_{M}(|0\rangle\vee|1^{\prime}\rangle\,|\,\mu)=|\langle0|\psi\rangle|^{2}+|\langle1^{\prime}|\psi\rangle|^{2}=\alpha^{2}+2\alpha^{4}.\label{eq:thm-probabilities}
\end{equation}
Combining Eqs.~(\ref{eq:thm-asymmetric-bound},\ref{eq:thm-probabilities})
it is found that
\begin{equation}
\varpi(|0\rangle\,|\,\mu)\leq\alpha^{2}\left(\frac{1}{2}+\alpha^{2}\right)<\alpha^{2}.\label{eq:thm1-conc}
\end{equation}
But, this contradicts the assumption that $|\psi\rangle$ is an epistemic
superposition which implies $\varpi(|0\rangle|\mu)=\alpha^{2}$ by
Eq.~(\ref{eq:epistemic-superposition}). Therefore, if the predictions
of quantum theory are to be exactly reproduced, any such $|\psi\rangle$
must be an ontic, rather than epistemic, superposition.
\end{proof}

\section{Bounds on general overlaps}

Theorem \ref{thm:superpositions-are-real} establishes the reality
of almost all superpositions in $d>3$ by bounding an asymmetric overlap.
This suggests that a similar method may be used to prove a general
bound on ontic overlaps.

Recall shortcomings (i) and (ii) of the previous single-system ontology
arguments as mentioned in Sec.~\ref{sec:Introduction}. Shortcoming
(i) leaves open the possibility that many pairs of quantum states
could have significant ontic overlaps, while (ii) casts doubt on the
significance of those zero-overlap limits (as orthogonal states are
distinguishable and therefore must be trivially ontologically distinct).

The following theorem address these shortcomings.
\begin{thm}
\label{thm:large-d-limit-theorem}Consider a $d>3$ dimensional quantum
system and any pair $|\psi\rangle,|0\rangle\in\mathcal{P}(\mathcal{H})$
such that $|\langle0|\psi\rangle|^{2}\eqdef\alpha^{2}\in(0,\frac{1}{4})$.
Assume that pure state preparations of $|\psi\rangle$ are non-contextual
with respect to stabiliser unitaries of $|\psi\rangle$. For any preparation
distribution $\mu\in\Delta_{|\psi\rangle}$, the asymmetric overlap
must satisfy
\begin{eqnarray}
\varpi(|0\rangle\,|\,\mu) & \leq & \alpha^{2}\left(\frac{1+2\alpha}{d-2}\right)\\
\lim_{d\rightarrow\infty}\varpi(|0\rangle\,|\,\mu) & = & 0
\end{eqnarray}
and so becomes arbitrarily small as $d$ increases, independently
of $\alpha$.
\end{thm}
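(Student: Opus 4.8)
The plan is to mimic the structure of the proof of Theorem~\ref{thm:superpositions-are-real}, but now leveraging the higher dimension $d$ to build not just one auxiliary state $|\phi\rangle$, but a whole family $|\phi_1\rangle,\dots,|\phi_{d-3}\rangle$ of states, each obtained from $|0\rangle$ by a stabiliser unitary of $|\psi\rangle$, and each ``pointing in a different direction'' so that the triples $\{|\psi\rangle,|\phi_j\rangle,|0\rangle\}$ are anti-distinguishable and moreover the union $\Lambda_{|0\rangle}\cup\Lambda_{|\phi_1\rangle}\cup\dots$ is controlled by a single measurement. First I would set up a basis $\mathcal{B}'=\{|0\rangle,|1'\rangle,\dots,|(d-1)'\rangle\}$ in which $|\psi\rangle=\alpha|0\rangle+\beta|1'\rangle+\gamma|2'\rangle$ with $\alpha$ real (WLOG), and then define, for $j=1,\dots,d-3$, states $|\phi_j\rangle=\delta|0\rangle+\eta|1'\rangle+\kappa|(j+2)'\rangle$ using the same coefficients $\delta,\eta,\kappa$ as in the single-state construction, just changing which ``fresh'' basis vector each $|\phi_j\rangle$ occupies (so $|\phi_1\rangle$ uses $|3'\rangle$, $|\phi_2\rangle$ uses $|4'\rangle$, etc., up to $|(d-1)'\rangle$). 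Each such $|\phi_j\rangle$ satisfies $|\langle\phi_j|\psi\rangle|^2=\alpha^2=|\langle0|\psi\rangle|^2$, so by preparation non-contextuality with respect to stabiliser unitaries there is a stabiliser $U_j\in\mathcal{S}_{|\psi\rangle}$ with $U_j|0\rangle=|\phi_j\rangle$, hence $\varpi(|\phi_j\rangle\,|\,\mu)\ge\varpi(|0\rangle\,|\,\mu')$ for appropriate $\mu,\mu'\in\Delta_{|\psi\rangle}$; non-contextuality lets us take $\mu=\mu'$, giving $\varpi(|\phi_j\rangle\,|\,\mu)\ge\varpi(|0\rangle\,|\,\mu)$ for all $j$ simultaneously.

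Next I would verify the two key combinatorial facts. First, each triple $\{|\psi\rangle,|\phi_j\rangle,|0\rangle\}$ is anti-distinguishable: the inner products are exactly those of Theorem~\ref{thm:superpositions-are-real}, so Eq.~(\ref{eq:antidistinguishing}) holds as long as $\alpha^2<\tfrac14$ (which is why the hypothesis is tightened from $(0,\tfrac12)$ to $(0,\tfrac14)$ — one needs the stronger inequality to survive with room to spare, or more precisely the quadratic constraint $(1-a-b-c)^2\ge 4abc$ with $a=b=\alpha^2$, $c=|\langle0|\phi_j\rangle|^2=\delta^2$ forces a smaller window). Consequently $\Lambda_{|\psi\rangle}\cap\Lambda_{|\phi_j\rangle}\cap\Lambda_{|0\rangle}=\emptyset$ for each $j$. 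But I need more: I need the sets $\Lambda_{|\phi_1\rangle}\cap\Lambda_{|\psi\rangle},\dots,\Lambda_{|\phi_{d-3}\rangle}\cap\Lambda_{|\psi\rangle}$ together with $\Lambda_{|0\rangle}\cap\Lambda_{|\psi\rangle}$ to be pairwise disjoint (or at least to have the measure of their union bounded below by the sum of their $\mu$-measures). This is the second combinatorial fact, and it is where I would have to be careful: I would argue that because distinct $|\phi_j\rangle,|\phi_k\rangle$ and $|0\rangle$ span a subspace on which there is an ONB measurement distinguishing $|0\rangle$ from "not $|\phi_j\rangle$" from "not $|\phi_k\rangle$" in a way that is mutually exclusive on $\Lambda_{|\psi\rangle}$ — essentially that the whole collection $\{|0\rangle\}\cup\{|\phi_j\rangle\}$ is "jointly anti-distinguishable relative to $|\psi\rangle$" — the supports restricted to $\Lambda_{|\psi\rangle}$ can be taken disjoint. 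Concretely I expect to use that $\mathbb{P}_M(|0\rangle\vee|1'\rangle\,|\,\mu)\ge\varpi(|0\rangle,|\phi_1\rangle,\dots,|\phi_{d-3}\rangle\,|\,\mu)$ for the $\mathcal{B}'$-measurement $M$, and then that this multipartite overlap equals $\varpi(|0\rangle\,|\,\mu)+\sum_j\varpi(|\phi_j\rangle\,|\,\mu)$ by an iterated anti-distinguishability argument.

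Granting that, the endgame is a counting estimate. The $\mathcal{B}'$-measurement $M$ applied after preparing $|\psi\rangle$ via $\mu$ can only yield outcomes $|0\rangle,|1'\rangle,|2'\rangle$ (since $\lambda\in\Lambda_{|\psi\rangle}$), and each $\lambda\in\Lambda_{|\phi_j\rangle}\cap\Lambda_{|\psi\rangle}$ yields $|0\rangle$ or $|1'\rangle$ (it cannot yield $|2'\rangle$ because $\langle\phi_j|2'\rangle=0$ for all $j$ by construction). Hence
\begin{equation}
|\langle0|\psi\rangle|^2+|\langle1'|\psi\rangle|^2=\mathbb{P}_M(|0\rangle\vee|1'\rangle\,|\,\mu)\ge\varpi(|0\rangle\,|\,\mu)+\sum_{j=1}^{d-3}\varpi(|\phi_j\rangle\,|\,\mu)\ge(d-2)\,\varpi(|0\rangle\,|\,\mu).
\end{equation}
Since $|\langle0|\psi\rangle|^2=\alpha^2$ and $|\langle1'|\psi\rangle|^2=\beta^2=2\alpha^4$, the left side is $\alpha^2+2\alpha^4=\alpha^2(1+2\alpha^2)\le\alpha^2(1+2\alpha)$ for $\alpha\in(0,1)$, which rearranges to $\varpi(|0\rangle\,|\,\mu)\le\alpha^2\bigl(\tfrac{1+2\alpha}{d-2}\bigr)$, and the $d\to\infty$ limit follows immediately. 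The main obstacle, as flagged above, is rigorously establishing the joint disjointness of the supports $\{\Lambda_{|\phi_j\rangle}\cap\Lambda_{|\psi\rangle}\}_j$ (together with $\Lambda_{|0\rangle}$): pairwise anti-distinguishability of triples is not by itself enough, so I would need either a genuinely $(d-2)$-outcome anti-distinguishing measurement on the relevant subspace, or an inductive/telescoping argument peeling off one $|\phi_j\rangle$ at a time while tracking which outcomes remain available on the shrinking support — and verifying that the requisite anti-distinguishability conditions continue to hold at each stage is exactly what pins the inner-product window down to $(0,\tfrac14)$.
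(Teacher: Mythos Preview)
Your overall strategy matches the paper's exactly: build $d-3$ auxiliary states $|\phi_j\rangle=\delta|0\rangle+\eta|1'\rangle+\kappa|(j+2)'\rangle$, use stabiliser non-contextuality to get $\varpi(|\phi_j\rangle\,|\,\mu)\ge\varpi(|0\rangle\,|\,\mu)$ for every $j$ and a single $\mu$, turn the multipartite overlap into a sum via anti-distinguishability, and bound it by $\mathbb{P}_M(|0\rangle\vee|1'\rangle\,|\,\mu)$ for the $\mathcal{B}'$-measurement. The endgame inequality $(d-2)\varpi(|0\rangle\,|\,\mu)\le\alpha^2+\beta^2$ is also exactly the paper's.

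However, there is a genuine gap in your construction. You reuse Theorem~\ref{thm:superpositions-are-real}'s coefficients $\beta=\sqrt{2}\alpha^{2}$, $\eta=\sqrt{2}\alpha$, and you correctly flag that the ``joint disjointness'' of the $\Lambda_{|\phi_j\rangle}\cap\Lambda_\mu$ is the crux. As the paper's appendix makes explicit, the needed equality $\varpi(|0\rangle,|\phi_3\rangle,\dots\,|\,\mu)=\varpi(|0\rangle\,|\,\mu)+\sum_j\varpi(|\phi_j\rangle\,|\,\mu)$ requires that \emph{every} triple $\{|\psi\rangle,|\phi_i\rangle,|\phi_j\rangle\}$ with $i\neq j$ be anti-distinguishable, not just the triples $\{|\psi\rangle,|0\rangle,|\phi_j\rangle\}$. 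With the Theorem~\ref{thm:superpositions-are-real} coefficients one has $|\langle\phi_i|\phi_j\rangle|^2=\bigl((1-2\alpha^2)^2+2\alpha^2\bigr)^2$, and a short expansion shows Eq.~(\ref{eq:antidistinguishing}) \emph{fails} for these triples: for small $\alpha$, $(1-a-b-c)^2\approx 4\alpha^4(1-12\alpha^2)<4\alpha^4(1-4\alpha^2)\approx 4abc$. The paper's remedy is to change the construction, setting $\beta=\eta=\sqrt{2}\,\alpha^{3/2}$ (it notes explicitly that ``the definitions of $\beta$ and $\eta$ have changed''). This preserves $|\langle\phi_j|\psi\rangle|^2=\alpha^2$ while shrinking $|\langle\phi_i|\phi_j\rangle|$ enough that both families of triples satisfy Eq.~(\ref{eq:antidistinguishing}) on the range $\alpha^2\in(0,\tfrac14)$; this, rather than the $\{|\psi\rangle,|0\rangle,|\phi_j\rangle\}$ triples, is what actually pins down the window. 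A side effect is that $|\langle1'|\psi\rangle|^2=\beta^2=2\alpha^3$, so $\mathbb{P}_M(|0\rangle\vee|1'\rangle\,|\,\mu)=\alpha^2(1+2\alpha)$ on the nose and your final inequality $1+2\alpha^2\le 1+2\alpha$ becomes unnecessary.
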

The proof, in appendix~\ref{sec:Proofs}, closely follows that of
Thm.~\ref{thm:superpositions-are-real}. The assumption of pure state
preparation non-contextuality with respect to stabiliser unitaries
is required to replace the assumption used in Thm.~\ref{thm:superpositions-are-real}
that $|\psi\rangle$ is an epistemic superposition with respect to
$|0\rangle$.

\section{Noise tolerance}

Thus far Eq.~(\ref{eq:exact-quantum-statistics}) has been assumed,
demanding that quantum statistics are exactly reproduced by valid
ontological models. However, it is impossible to verify this. At most,
experiments demonstrate quantum probabilities hold to within some
finite additive error $\epsilon\in(0,1]$. It is therefore necessary
to consider \emph{noise-tolerant }versions of the above theorems.

Unfortunately, the asymmetric overlap is a noise intolerant quantity---there
exist simple ontological models in which every pair of quantum states
have unit asymmetric overlap and still reproduce quantum probabilities
to within any given $\epsilon\in(0,1]$. However, an alternative overlap
measure, the \emph{symmetric overlap} $\omega(|\psi\rangle,|\phi\rangle)$
\citep{Maroney12a,Barrett14,Leifer14a,Branciard14,Leifer14b}, is
robust to small errors and Thm.~\ref{thm:large-d-limit-theorem}
can be modified to bound the symmetric overlap in a noise-tolerant
way.

Suppose you are given some $\lambda\in\Lambda$ obtained by sampling
from either $\mu$ or $\nu$ (each with equal \emph{a priori }probability).
If you try to guess which of $\mu,\nu$ was used, then $\omega(\mu,\nu)/2$
is defined to be the average probability of error when using the optimal
strategy. This is known to correspond to \citep{Maroney12a,Barrett14}
\begin{equation}
\omega(\mu,\nu)\eqdef\int_{\Lambda}\mathrm{d}\lambda\min\{\mu(\lambda),\nu(\lambda)\}.
\end{equation}
Extending this to quantum states themselves, rather than to preparation
distributions, gives the symmetric overlap 
\begin{equation}
\omega(|\psi\rangle,|\phi\rangle)\eqdef\sup_{\mu\in\Delta_{|\psi\rangle},\nu\in\Delta_{|\phi\rangle}}\omega(\mu,\nu).\label{eq:quantum-state-symmetric}
\end{equation}

Quantum theory provides an upper bound on the symmetric overlap, since
any quantum procedure for distinguishing $|\psi\rangle,|\phi\rangle$
is also a method for distinguishing $\mu\in\Delta_{|\psi\rangle},\nu\in\Delta_{|\phi\rangle}$
in an ontological model. As $\frac{1}{2}\left(1-\sqrt{1-|\langle\phi|\psi\rangle|^{2}}\right)$
is the minimum average error probability when distinguishing $|\psi\rangle,|\phi\rangle$
within quantum theory\footnote{By using the Helstrom measurement \citep{Waldherr12,Barrett14}.}
it follows that $\omega(\mu,\nu)\leq1-\sqrt{1-|\langle\phi|\psi\rangle|^{2}}$
holds for every $\mu\in\Delta_{|\psi\rangle},\nu\in\Delta_{|\phi\rangle}$
and so 
\begin{equation}
\omega(|\psi\rangle,|\phi\rangle)\leq1-\sqrt{1-|\langle\phi|\psi\rangle|^{2}}.\label{eq:symmetric-state-trivial-bound}
\end{equation}

\begin{thm}
\label{thm:noisy-theorem}Consider the assumptions of Thm.~\ref{thm:large-d-limit-theorem},
but only assume that the probabilities predicted by quantum theory
are accurate to within $\pm\epsilon$, for some $\epsilon\in(0,1]$.
The symmetric overlap must satisfy
\begin{eqnarray}
\omega(|0\rangle,|\psi\rangle) & \leq & \alpha^{2}\left(\frac{1+2\alpha}{d-2}\right)+\frac{(3d^{2}-7d)}{2(d-2)}\epsilon.
\end{eqnarray}
This bound is tighter than Eq.~(\ref{eq:symmetric-state-trivial-bound})
for $d>5$ for small $\epsilon$.
\end{thm}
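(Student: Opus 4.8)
The plan is to rerun the proof of Thm.~\ref{thm:large-d-limit-theorem} with the asymmetric overlap $\varpi$ replaced throughout by the symmetric overlap $\omega$, and with every appeal to the exact Born rule weakened to its $\pm\epsilon$ version, keeping track of the error that accumulates. This is forced: as already noted $\varpi$ is noise-intolerant, so the $\varpi$-bound of Thm.~\ref{thm:large-d-limit-theorem} cannot simply be quoted and the whole argument must be re-derived for $\omega$ directly.

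I would keep the geometry of Thm.~\ref{thm:large-d-limit-theorem} intact: the ONB $\mathcal{B}^{\prime}=\{|0\rangle\}\cup\{|i^{\prime}\rangle\}_{i=1}^{d-1}$ with $|\psi\rangle=\alpha|0\rangle+\beta|1^{\prime}\rangle+\gamma|2^{\prime}\rangle$, the states $|\phi_{k}\rangle=\delta|0\rangle+\eta|1^{\prime}\rangle+\kappa|k^{\prime}\rangle$ for $k=3,\dots,d-1$, the stabiliser unitaries $U_{k}\in\mathcal{S}_{|\psi\rangle}$ with $U_{k}|0\rangle=|\phi_{k}\rangle$, the $\mathcal{B}^{\prime}$-basis measurement $M$, and, for every pair of distinct states $\sigma,\tau$ drawn from $\{|0\rangle,|\phi_{3}\rangle,\dots,|\phi_{d-1}\rangle\}$, the three-outcome measurement anti-distinguishing the triple $\{|\psi\rangle,\sigma,\tau\}$ --- this anti-distinguishability, which needs $\alpha^{2}\in(0,\tfrac14)$, is exactly what plays the role of the ``epistemic superposition'' hypothesis of Thm.~\ref{thm:superpositions-are-real}. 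Then fix any $\mu\in\Delta_{|\psi\rangle}$, $\nu_{0}\in\Delta_{|0\rangle}$, pick $\gamma_{k}\in\Gamma_{U_{k}}$, and use preparation non-contextuality with respect to stabiliser unitaries of $|\psi\rangle$ to get $\gamma_{k}\mu=\mu$, while $\nu_{k}\eqdef\gamma_{k}\nu_{0}\in\Delta_{|\phi_{k}\rangle}$. Because total-variation distance contracts under any stochastic map, $\omega(\mu,\nu_{k})=\omega(\gamma_{k}\mu,\gamma_{k}\nu_{0})\geq\omega(\mu,\nu_{0})$. Writing $f_{k}(\lambda)\eqdef\min\{\mu(\lambda),\nu_{k}(\lambda)\}$ (and $f_{0}$ from $\nu_{0}$), each $\int f_{k}\,\mathrm{d}\lambda\geq\omega(\mu,\nu_{0})$ for $k\in\{0,3,\dots,d-1\}$, so $\sum_{k}\int f_{k}\,\mathrm{d}\lambda\geq(d-2)\,\omega(\mu,\nu_{0})$.

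The remaining task is to bound $\sum_{k}\int f_{k}$ from above by quantum quantities plus an error, and two coincidences that were exact in Thm.~\ref{thm:large-d-limit-theorem} now survive only up to $\epsilon$. First, \emph{near-disjointness}: each $f_{k}$ is supported in $\Lambda_{|\psi\rangle}\cap\Lambda_{|\phi_{k}\rangle}$ (in $\Lambda_{|\psi\rangle}\cap\Lambda_{|0\rangle}$ for $f_{0}$), and anti-distinguishability of $\{|\psi\rangle,\sigma,\tau\}$ made the relevant supports disjoint; with noise, splitting $\min\{f_{j},f_{k}\}$ over the three anti-distinguishing outcomes and using that $\min\{f_{j},f_{k}\}$ is dominated by $\mu$ and by each of the two relevant preparation distributions gives $\int\min\{f_{j},f_{k}\}\,\mathrm{d}\lambda\leq3\epsilon$; summing the pointwise inequality $\max\{0,\sum_{k}f_{k}-\mu\}\leq\sum_{j<k}\min\{f_{j},f_{k}\}$ yields $\sum_{k}f_{k}\leq\mu+g$ with $\int g\,\mathrm{d}\lambda\leq3\binom{d-2}{2}\epsilon$. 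Second, \emph{outcome localisation}: the $\pm\epsilon$ Born rule for $M$ applied to $|\psi\rangle$ (via $\mu$), to each $|\phi_{k}\rangle$ (via $\nu_{k}$), and to $|0\rangle$ (via $\nu_{0}$) bounds $\int f_{k}\,\mathbb{P}_{M}(|j^{\prime}\rangle\,|\,\lambda)\,\mathrm{d}\lambda\leq\epsilon$ for every outcome $|j^{\prime}\rangle\notin\{|0\rangle,|1^{\prime}\rangle\}$ (those with $j\geq3$ through $f_{k}\leq\mu$, the outcome $|2^{\prime}\rangle$ through $f_{k}\leq\nu_{k}$). Decomposing $\sum_{k}\int f_{k}=\int(\sum_{k}f_{k})\,\mathbb{P}_{M}(|0\rangle\vee|1^{\prime}\rangle)\,\mathrm{d}\lambda+\sum_{k}\int f_{k}\,\mathbb{P}_{M}(\text{other outcomes})\,\mathrm{d}\lambda$, the first summand is at most $\int(\mu+g)\,\mathbb{P}_{M}(|0\rangle\vee|1^{\prime}\rangle)\leq\mathbb{P}_{M}(|0\rangle\,|\,\mu)+\mathbb{P}_{M}(|1^{\prime}\rangle\,|\,\mu)+\int g\leq\alpha^{2}(1+2\alpha)+2\epsilon+\int g$ (the Born probabilities of the $|0\rangle$ and $|1^{\prime}\rangle$ outcomes of $M$ on $|\psi\rangle$ summing to $\alpha^{2}(1+2\alpha)$, as in Thm.~\ref{thm:large-d-limit-theorem}), and the second is controlled by the localisation bound. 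Dividing by $d-2$, using $\int g\leq3\binom{d-2}{2}\epsilon$, and taking the supremum over $\mu,\nu_{0}$ then collects the $\epsilon$-terms into $\omega(|0\rangle,|\psi\rangle)\leq\alpha^{2}\big(\tfrac{1+2\alpha}{d-2}\big)+\tfrac{3d^{2}-7d}{2(d-2)}\epsilon$.

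I expect the near-disjointness step to be the main obstacle: in the noiseless proof anti-distinguishability makes the supports of distinct $f_{k}$ literally disjoint, a pointwise fact, whereas with noise one only has the averaged $\int\min\{f_{j},f_{k}\}\leq3\epsilon$, and pushing this through all $\binom{d-2}{2}$ pairs is precisely what manufactures the $O(d^{2})\epsilon$ term; getting its coefficient exactly right, and re-verifying that every triple $\{|\psi\rangle,|\phi_{j}\rangle,|\phi_{k}\rangle\}$ (not just the $\{|\psi\rangle,|0\rangle,|\phi_{k}\rangle\}$ ones) satisfies Eq.~(\ref{eq:antidistinguishing}) throughout $\alpha^{2}\in(0,\tfrac14)$, is the delicate bookkeeping. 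The concluding comparison with Eq.~(\ref{eq:symmetric-state-trivial-bound}) is then routine, since for fixed $\alpha$ the new leading term is $O(1/d)$ while $1-\sqrt{1-\alpha^{2}}$ is $d$-independent, so the bound improves on Eq.~(\ref{eq:symmetric-state-trivial-bound}) for small $\epsilon$ once $d>5$.
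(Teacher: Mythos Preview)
Your proposal is correct and the overall strategy is exactly the paper's: the same construction of $\mathcal{B}'$ and $\{|\phi_k\rangle\}$, the same use of stabiliser non-contextuality to get $\omega(\mu,\nu_k)\geq\omega(\mu,\nu_0)$, the same tripartite bound $\omega(\mu,\nu_j,\nu_k)\leq3\epsilon$ from anti-distinguishability, and the same target inequality $\mathbb{P}_M(|0\rangle\vee|1'\rangle\,|\,\mu)\gtrsim\sum_k\omega(\mu,\nu_k)-\sum_{j<k}\omega(\mu,\nu_j,\nu_k)$. The one genuine technical difference is how that last inequality is established. The paper proves it by an explicit partition of $\Lambda$ into regions $\Omega_i,\Theta_i,\Theta_i^j,\Omega_{ij}$ (defined by which of $\mu,\chi_i,\chi_j$ is smallest where) and bounds $\mathbb{P}_M(|0\rangle\vee|1'\rangle)$ on each piece separately; you instead use the pointwise inequality $\max\{0,\sum_k f_k-\mu\}\leq\sum_{j<k}\min\{f_j,f_k\}$ (valid since each $f_k\leq\mu$, and $\min\{f_j,f_k\}=\min\{\mu,\nu_j,\nu_k\}$ is precisely the tripartite integrand) together with a decomposition of $\sum_k\int f_k$ over measurement outcomes. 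Your route is more compact and, if the localisation step is organised per $k$ rather than per outcome (for $k\geq3$: one $\epsilon$ for the event $|k'\rangle$ via $f_k\leq\mu$ and one $\epsilon$ for ``not $|0\rangle,|1'\rangle,|k'\rangle$'' via $f_k\leq\nu_k$; for $k=0$: one $\epsilon$ for ``not $|0\rangle$'' via $f_0\leq\nu_0$), it in fact yields a slightly smaller error coefficient than the paper's $2(2d-5)\epsilon$ from that step. As written, your per-outcome phrasing would naively accumulate $O(d^2)$ errors from localisation alone, so this is the place where your acknowledged ``delicate bookkeeping'' matters.
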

The proof is provided in appendix~\ref{sec:Proofs}. This theorem
makes Thm.~\ref{thm:large-d-limit-theorem} noise-tolerant at the
expense of weakening the bound (and only applying for $d>5$). This
is because the simple bound on symmetric overlap {[}Eq.~(\ref{eq:symmetric-state-trivial-bound}){]}
is lower than that for the asymmetric overlap {[}Eq.~(\ref{eq:asymmetric-trivial-bound}){]}
and therefore more difficult to improve upon.

Note that this theorem does not immediately imply that almost all
superpositions are real. However, by demonstrating that Thm\@.~\ref{thm:large-d-limit-theorem}'s
arguments can be made robust against error, it suggests that a noise-tolerant
version of Thm\@.~\ref{thm:superpositions-are-real} should also
be possible. Even so, a noise-tolerant version of Thm.~\ref{thm:superpositions-are-real}
would require the definition of ``epistemic superposition'' to be
modified, since it is currently defined in terms of the noise intolerant
asymmetric overlap and is therefore noise intolerant.

\section{Discussion}

Assuming that quantum statistics are exactly correct, Thm.~\ref{thm:superpositions-are-real}
proves that, for $d>3$, almost all superpositions defined with respect
to any given basis $\mathcal{B}$ must be real. Therefore, any epistemic
realist account of quantum theory must include ontic features corresponding
to superposition states. The unfortunate cat cannot be put out of
its misery.

A similar method and construction is used in Thm.~\ref{thm:large-d-limit-theorem}
to prove that, for arbitrary states satisfying $|\langle\phi|\psi\rangle|^{2}\in(0,\frac{1}{4})$,
ontic overlap must approach zero as $d$ increases for fixed $|\langle\phi|\psi\rangle|^{2}$.
Theorem~\ref{thm:noisy-theorem} makes this robust against small
errors in quantum probabilities, at the expense of weakening the bound.
Both theorems require an extra assumption: pure state preparation
non-contextuality with respect to stabiliser unitaries. Pure state
preparation contextuality is often implicitly assumed wholesale, so
this assumption should not be very controversial. Moreover, appendix~\ref{sec:Justifying-Preparation-Non-conte}
provides a heuristic argument to the effect that this type of contextuality
is a natural assumption in practice.

These results are damaging to any epistemic approach to quantum theory
compatible with the ontological models formalism that reproduces quantum
statistics exactly. Such a programme can never hope to epistemically
explain superpositions, including macroscopic superpositions. Moreover,
for any moderately large system, a large number of pairs of non-orthogonal
states cannot overlap significantly, making it unlikely that such
overlaps can satisfactorily explain quantum features.

As a result tolerant to small errors, it is possible that Thm.~\ref{thm:noisy-theorem}
could be experimentally tested. Such a test would require demonstration
of small errors in probabilities for a wide range of measurements
on a $d>5$ dimensional system.

The methodology of Thms.~\ref{thm:superpositions-are-real},\ref{thm:large-d-limit-theorem}
is tightly linked to the asymmetric overlap as a probability, making
noise-tolerant versions a challenge to extract. If the conclusion
from Thms.~\ref{thm:superpositions-are-real},\ref{thm:large-d-limit-theorem}
could be obtained though an operational methodology (closer to that
of Bell's theorem \citep{Bell87} or the PBR theorem \citep{Pusey12})
this would likely lead to better noise-tolerant extensions and better
opportunities for experimental investigation. Such an operational
version may also make it easier to discover any information theoretic
implications of these results.
\begin{acknowledgments}
I would like to thank Jonathan Barrett, Owen Maroney, Dominic C. Horsman,
and Matty Hoban for insightful discussions as well as an anonymous
referee for thorough and insightful comments. This work is supported
by: the Engineering and Physical Sciences Research Council (EPSRC);
the European Coordinated Research on Long-term Challenges in Information
and Communication Sciences \& Technologies (CHIST-ERA) project on
Device Independent Quantum Information Processing (DIQIP); and the
Foundational Questions Institute (FQXi) Large Grant ``Thermodynamic
vs information theoretic entropies in probabilistic theories''.
\end{acknowledgments}

\bibliographystyle{apsrev4-1}
\bibliography{references}

\onecolumngrid

\appendix

\section{Proofs\label{sec:Proofs}}

\subsection{Simple upper bound on asymmetric overlap}

To prove Eq.~(\ref{eq:asymmetric-trivial-bound}) from Eq.~(\ref{eq:exact-quantum-statistics}),
first consider any $|\phi\rangle\in\mathcal{P}(\mathcal{H})$ and
any ONB measurement $M\ni|\phi\rangle$ and no evolution between preparation
and measurement ($U=\mathbbm{1}$ and $\gamma$ is trivial). Equation~(\ref{eq:exact-quantum-statistics})
then gives
\begin{eqnarray}
1 & = & \int_{\Lambda}\mathrm{d}\lambda\,\nu(\lambda)\mathbb{P}_{M}(|\phi\rangle\,|\,\lambda)\\
 & = & \int_{\Lambda_{\nu}}\mathrm{d}\lambda\,\nu(\lambda)\mathbb{P}_{M}(|\phi\rangle\,|\,\lambda)
\end{eqnarray}
for any $\nu\in\Delta_{|\phi\rangle}$. This can only be the case
if $\mathbb{P}_{M}(|\phi\rangle\,|\,\lambda)=1$ for almost all $\lambda\in\Lambda_{\nu}$
and therefore\footnote{A more mathematically rigorous treatment would fully consider this
step in the light of $\Delta_{|\phi\rangle}$ being uncountable in
the general case. Such a discussion is omitted for the sake of conceptual
clarity and since a more rigorous treatment would also have to account
for the issues raised in footnote 3.} (since $\nu\in\Delta_{|\phi\rangle}$ is arbitrary) $\mathbb{P}_{M}(|\phi\rangle\,|\,\lambda)=1$
for almost all $\lambda\in\Lambda_{|\phi\rangle}$. In other words,
almost all ontic states in the support of any preparation $\nu$ of
$|\phi\rangle$ must return the measurement result $|\phi\rangle$
with certainty in any measurement $M$ containing that result.

Now consider that $\varpi(|\phi\rangle|\mu)$ is the probability of
obtaining some $\lambda\in\Lambda_{|\phi\rangle}$ when sampling $\mu$.
If $\mu\in\Delta_{|\psi\rangle}$ for some $|\psi\rangle\in\mathcal{P}(\mathcal{H})$,
then 
\begin{eqnarray}
\varpi(|\phi\rangle\,|\,\mu) & \eqdef & \int_{\Lambda_{|\phi\rangle}}\mathrm{d}\lambda\,\mu(\lambda)\\
 & = & \int_{\Lambda_{|\phi\rangle}}\mathrm{d}\lambda\,\mu(\lambda)\mathbb{P}_{M}(|\phi\rangle\,|\,\lambda)\\
 & \leq & \int_{\Lambda}\mathrm{d}\lambda\,\mu(\lambda)\mathbb{P}_{M}(|\phi\rangle\,|\,\lambda)=|\langle\phi|\psi\rangle|^{2}
\end{eqnarray}
by Eqs.~(\ref{eq:exact-quantum-statistics},\ref{eq:asymmetric-overlap}),
thus proving Eq.~(\ref{eq:asymmetric-trivial-bound}).\qed

\subsection{Anti-distinguishability and multi-partite asymmetric overlaps}

The main text states that if $\{|\psi\rangle,|\phi\rangle,|0\rangle\}$
is an anti-distinguishable triple, then $\Lambda_{|\psi\rangle}\cap\Lambda_{|\phi\rangle}\cap\Lambda_{|0\rangle}=\emptyset$
which further implies that $\varpi(|0\rangle,|\phi\rangle|\mu)=\varpi(|0\rangle|\mu)+\varpi(|\phi\rangle|\mu)$
$\forall\mu\in\Delta_{|\psi\rangle}$. Here, a more general statement,
necessary for the proofs of Thms.~\ref{thm:large-d-limit-theorem},\ref{thm:noisy-theorem},
is proved. Define the set $\mathcal{A}=\{|0\rangle,|\phi\rangle,...\}$
and let $\mu\in\Delta_{|\psi\rangle}$ be a preparation distribution
for a state $|\psi\rangle\not\in\mathcal{A}$. The statement is that
if each triple $\{|0\rangle,|\psi\rangle,|\phi\rangle\}$, where $|0\rangle,|\phi\rangle$
are unequal states from $\mathcal{A}$, is anti-distinguishable, then
Eq.~(\ref{eq:multipartite-anti-distinguishable-bound}) holds with
equality. 

Recall that $\varpi(|0\rangle,|\phi\rangle,...|\mu)$ is the probability
of obtaining a $\lambda\in\cup_{|a\rangle\in\mathcal{A}}\Lambda_{|a\rangle}$
by sampling from $\mu$, while, for every $|\phi\rangle\in\mathcal{A}$,
$\varpi(|\phi\rangle|\mu)$ is the probability of obtaining a $\lambda\in\Lambda_{|\phi\rangle}$
from $\mu$. The event corresponding to the probability $\varpi(|0\rangle,|\phi\rangle,...|\mu)$
must therefore be the disjunction of the events corresponding to each
probability $\varpi(|\phi\rangle\in\mathcal{A}|\mu)$. Applying Boole's
inequality therefore gives Eq.~(\ref{eq:multipartite-anti-distinguishable-bound}). 

Now suppose that each triple $\{|0\rangle,|\psi\rangle,|\phi\rangle\}$,
where $|0\rangle,|\phi\rangle$ are unequal states from $\mathcal{A}$,
is anti-distinguishable. Can the events corresponding to $\varpi(|0\rangle|\mu)$
and $\varpi(|\phi\rangle|\mu)$ occur simultaneously (or are they
mutually exclusive)? This is only possible if there exists a finite-measure
set of ontic states $\lambda\in\Lambda_{\mu}\cap\Lambda_{|0\rangle}\cap\Lambda_{|\phi\rangle}$.
It shall now be shown that anti-distinguishability and Eq.~(\ref{eq:exact-quantum-statistics})
prevent this.

Let $\chi\in\Delta_{|0\rangle},\nu\in\Delta_{|\phi\rangle}$ be any
relevant pair of preparation distributions and let $M=\{E_{\text{\textlnot}0},E_{\text{\textlnot}\psi},E_{\text{\textlnot}\phi}\}$
be the anti-distinguishing measurement for $\{|0\rangle,|\psi\rangle,|\phi\rangle\}$.
Equations~(\ref{eq:exact-quantum-statistics},\ref{eq:anti-distinguishing-measurement})
imply that
\begin{equation}
\int_{\Lambda_{\chi}}\mathrm{d}\lambda\,\chi(\lambda)\mathbb{P}_{M}(E_{\text{\textlnot}0}\,|\,\lambda)=\int_{\Lambda_{\mu}}\mathrm{d}\lambda\,\mu(\lambda)\mathbb{P}_{M}(E_{\text{\textlnot}\psi}\,|\,\lambda)=\int_{\Lambda_{\nu}}\mathrm{d}\lambda\,\nu(\lambda)\mathbb{P}_{M}(E_{\text{\textlnot}\phi}\,|\,\lambda)=0.
\end{equation}
These respectively imply the following: for almost all $\lambda\in\Lambda_{\chi}$,
$\mathbb{P}_{M}(E_{\text{\textlnot}0}|\lambda)=0$; for almost all
$\lambda\in\Lambda_{\mu}$, $\mathbb{P}_{M}(E_{\text{\textlnot}\psi}|\lambda)=0$;
and for almost all $\lambda\in\Lambda_{\nu}$, $\mathbb{P}_{M}(E_{\text{\textlnot}\phi}|\lambda)=0$.
Since this holds for arbitrary $\chi$ and $\nu$, it follows that\footnote{Similarly to the previous footnote, a fully rigorous treatment would
include a proof of this step, which is omitted for conceptual clarity
and since mathematical rigour has already been sacrificed for conceptual
clarity earlier in the paper.} for almost all $\lambda\in\Lambda_{|0\rangle}$, $\mathbb{P}_{M}(E_{\text{\textlnot}0}|\lambda)=0$,
and for almost all $\lambda\in\Lambda_{|\phi\rangle}$, $\mathbb{P}_{M}(E_{\text{\textlnot}\phi}|\lambda)=0$.

Therefore, for almost all $\lambda\in\Lambda_{\mu}\cap\Lambda_{|0\rangle}\cap\Lambda_{|\phi\rangle}$
it follows that $\mathbb{P}_{M}(E_{\text{\textlnot}0}|\lambda)=\mathbb{P}_{M}(E_{\text{\textlnot}\psi}|\lambda)=\mathbb{P}_{M}(E_{\text{\textlnot}\phi}|\lambda)=0$.
However, this is impossible since some outcome must occur in any measurement,
requiring $\mathbb{P}_{M}(E_{\text{\textlnot}0}|\lambda)+\mathbb{P}_{M}(E_{\text{\textlnot}\psi}|\lambda)+\mathbb{P}(E_{\text{\textlnot}\phi}|\lambda)=1$.
So $\Lambda_{\mu}\cap\Lambda_{|0\rangle}\cap\Lambda_{|\phi\rangle}$
must be of measure zero and the events corresponding to $\varpi(|0\rangle|\mu)$
and $\varpi(|\phi\rangle|\mu)$ cannot occur simultaneously---they
are mutually exclusive.

Since Boole's inequality holds with equality for mutually exclusive
events, it follows that Eq.~(\ref{eq:multipartite-anti-distinguishable-bound})
holds with equality whenever every such triple $\{|0\rangle,|\psi\rangle,|\phi\rangle\}$
is anti-distinguishable.\qed

\subsection{Unitary transformations never decrease ontic overlaps}

Consider quantum states $|\psi\rangle,|\phi\rangle\in\mathcal{P}(\mathcal{H})$,
a unitary transformation $\gamma\in\Gamma_{U}$, and preparation distribution
$\nu\in\Delta_{|\phi\rangle}$ so that under $\gamma$, $|\psi\rangle$
transforms to $U|\psi\rangle$ and $\nu$ to $\nu^{\prime}\in\Delta_{U|\phi\rangle}$.
By Eqs.~(\ref{eq:ontological-transformation},\ref{eq:asymmetric-overlap})
one finds
\begin{eqnarray}
\varpi(U|\psi\rangle\,|\,\nu^{\prime}) & = & \int_{\Lambda_{U|\psi\rangle}}\mathrm{d}\lambda^{\prime}\,\nu^{\prime}(\lambda^{\prime})\\
 & = & \int_{\Lambda_{U|\psi\rangle}}\mathrm{d}\lambda^{\prime}\int_{\Lambda}\mathrm{d}\lambda\,\nu(\lambda)\gamma(\lambda^{\prime}|\lambda)\\
 & \geq & \int_{\Lambda_{|\psi\rangle}}\mathrm{d}\lambda\,\nu(\lambda)\int_{\Lambda_{U|\psi\rangle}}\mathrm{d}\lambda^{\prime}\gamma(\lambda^{\prime}|\lambda).
\end{eqnarray}
Consider the transition probability $\int_{\Lambda_{U|\psi\rangle}}\mathrm{d}\lambda^{\prime}\gamma(\lambda^{\prime}|\lambda)$,
where $\lambda\in\Lambda_{|\psi\rangle}$. Suppose towards a contradiction
that this probability is less than unity $\int_{\Lambda_{U|\psi\rangle}}\mathrm{d}\lambda^{\prime}\gamma(\lambda^{\prime}|\lambda)<1$
for some finite measure of $\lambda\in\Lambda_{|\psi\rangle}$. This
implies that\footnote{Once again, such a more rigorous formulation of the problem would
require a full justification of this step.} there is some preparation $\mu\in\Delta_{|\psi\rangle}$ of $|\psi\rangle$
such that 
\begin{eqnarray}
1 & > & \int_{\Lambda}\mathrm{d}\lambda\,\mu(\lambda)\int_{\Lambda_{U|\psi\rangle}}\mathrm{d}\lambda^{\prime}\gamma(\lambda^{\prime}|\lambda)\\
 & = & \int_{\Lambda_{U|\psi\rangle}}\mathrm{d}\lambda^{\prime}\mu^{\prime}(\lambda^{\prime})\\
 & = & \int_{\Lambda}\mathrm{d}\lambda^{\prime}\mu^{\prime}(\lambda^{\prime})
\end{eqnarray}
by Eq.~(\ref{eq:ontological-transformation}) where $\mu^{\prime}\in\Delta_{U|\psi\rangle}$
is obtained from $\mu$ via $\gamma$, which is a contradiction since
preparations must always produce some ontic state $\int_{\Lambda}{\rm d}\lambda^{\prime}\mu^{\prime}(\lambda^{\prime})=1$.
Therefore, $\int_{\Lambda_{U|\psi\rangle}}\mathrm{d}\lambda^{\prime}\gamma(\lambda^{\prime}|\lambda)=1$
and so 
\begin{equation}
\varpi(U|\psi\rangle\,|\,\nu^{\prime})\geq\int_{\Lambda_{|\psi\rangle}}{\rm d}\lambda\,\nu(\lambda)=\varpi(|\psi\rangle\,|\,\nu),
\end{equation}
thus proving Eq.~(\ref{eq:first-thm-monotonicity}).

The same result also holds for the symmetric overlap {[}Eq.~(\ref{eq:quantum-state-symmetric}){]}
between any pair of quantum states $|\psi\rangle,|\phi\rangle$. Consider
any pair $\mu\in\Delta_{|\psi\rangle},\nu\in\Delta_{|\phi\rangle}$,
then $\omega(\mu,\nu)$ is simply twice the optimal average probability
of error when attempting to guess which of $\mu$ or $\nu$ a given
$\lambda\in\Lambda$ was sampled from. For any stochastic map $\gamma$
that transforms $\mu$ to $\mu^{\prime}$ and $\nu$ to $\nu^{\prime}$,
a strategy for distinguishing $\mu^{\prime},\nu^{\prime}$ is also
a strategy for distinguishing $\mu,\nu$. Therefore, the optimal strategy
for distinguishing $\mu^{\prime},\nu^{\prime}$ cannot, by definition,
have a lower probability of error than the optimal strategy for distinguishing
$\mu,\nu$. It immediately follows that 
\begin{equation}
\omega(\mu^{\prime},\nu^{\prime})\geq\omega(\mu,\nu)\label{eq:symmetric-monotonicity}
\end{equation}
and by Eq.~(\ref{eq:symmetric-state-trivial-bound}) that $\omega(U|\psi\rangle,U|\phi\rangle)\geq\omega(|\psi\rangle,|\phi\rangle)$
for any unitary $U$.\qed

\subsection{Theorem \ref{thm:large-d-limit-theorem}: Bounding general state
overlaps}

The proof strategy is almost identical to that of Thm.~\ref{thm:superpositions-are-real},
but modified to make use of higher dimensional systems.

Any such $|\psi\rangle$ can be written in the form of Eq.~(\ref{eq:Psi-construction})
for some ONB $\mathcal{B}^{\prime}=\{|0\rangle\}\cup\{|i^{\prime}\rangle\}_{i=1}^{d-1}$
and where $\beta\eqdef\sqrt{2}\alpha^{\frac{3}{2}}$. In this case
$|\alpha|^{2}+|\beta|^{2}=|\alpha|^{2}+2|\alpha|^{3}<1$ so the construction
remains possible. Similarly, a set of states $\{|\phi_{i}\rangle\}_{i=3}^{d-1}$
can be defined with respect to the same basis by 
\begin{equation}
|\phi_{i}\rangle\eqdef\delta|0\rangle+\eta|1^{\prime}\rangle+\kappa|i^{\prime}\rangle
\end{equation}
with $\delta\eqdef1-2\alpha^{2}$ and $\eta\eqdef\sqrt{2}\alpha^{\frac{3}{2}}$.
Again, this is possible since $|\delta|^{2}+|\eta|^{2}=(1-2\alpha^{2})^{2}+2\alpha^{3}<1$.
Note that the definitions of $\beta$ and $\eta$ have changed from
those used in Thm.~\ref{thm:superpositions-are-real}.

It may be verified by Eq.~(\ref{eq:antidistinguishing}) that both
$\{|0\rangle,|\psi\rangle,|\phi_{i}\rangle\}$ and $\{|\psi\rangle,|\phi_{i}\rangle,|\phi_{j}\rangle\}$
are anti-distinguishable triples for all $i\neq j$.

Note that $|\langle\phi_{i}|\psi\rangle|^{2}=\alpha^{2}=|\langle0|\psi\rangle|^{2}$
for all $i$, so there exist stabiliser unitaries $\{U_{i}\}_{i=3}^{d-1}\subset\mathcal{S}_{|\psi\rangle}$
for which $U_{i}|0\rangle=|\phi_{i}\rangle$. Consider preparing $|\psi\rangle$
via some $\mu\in\Delta_{|\psi\rangle}$ then transforming with $U_{i}$
via any $\gamma_{i}\in\Gamma_{U_{i}}$. By assumption, preparations
of $|\psi\rangle$ are non-contextual with respect to such stabiliser
unitaries, so $\mu$ simply transforms to itself. Therefore, by Eq.~(\ref{eq:first-thm-monotonicity}),
it is found that 
\begin{equation}
\varpi(|\phi_{i}\rangle\,|\,\mu)\geq\varpi(|0\rangle\,|\,\mu)\quad\forall i.
\end{equation}

So, prepare the state $|\psi\rangle$ via $\mu$ and then perform
a measurement $M$ in the $\mathcal{B}^{\prime}$ basis. Since $|0\rangle$
and $|1^{\prime}\rangle$ are the only measurement outcomes compatible
with $\lambda\in\Lambda_{|\psi\rangle}\cap(\Lambda_{|0\rangle}\cup_{i=3}^{d-1}\Lambda_{|\phi_{i}\rangle})$,
then the asymmetric overlap with these states must lower bound the
probability of obtaining either $|0\rangle$ or $|1^{\prime}\rangle$.
One therefore finds that 
\begin{eqnarray}
\mathbb{P}_{M}(|0\rangle\vee|1^{\prime}\rangle\,|\,\mu) & \geq & \varpi(|0\rangle,|\phi_{3}\rangle,...,|\phi_{d-1}\rangle\,|\,\mu)\nonumber \\
 & = & \varpi(|0\rangle\,|\,\mu)+\sum_{i=3}^{d-1}\varpi(|\phi_{i}\rangle\,|\,\mu)\nonumber \\
 & \geq & (d-2)\varpi(|0\rangle\,|\,\mu)
\end{eqnarray}
where the second line follows because each of the sets $\{|0\rangle,|\psi\rangle,|\phi_{i}\rangle\}$
and $\{|\psi\rangle,|\phi_{i}\rangle,|\phi_{j}\rangle\}$ are anti-distinguishable.
So if quantum predictions are exactly reproduced, one finds that $\mathbb{P}_{M}(|0\rangle\vee|1^{\prime}\rangle|\mu)=\alpha^{2}+2\alpha^{3}$
and 
\begin{equation}
\varpi(|0\rangle\,|\,\mu)\leq\alpha^{2}\left(\frac{1+2\alpha}{d-2}\right).
\end{equation}
Which completes the proof.\qed

\subsection{Theorem \ref{thm:noisy-theorem}: A noise-tolerant bound on the symmetric
overlap}

This proof uses the the assumptions, notation, and constructions from
Thm.~\ref{thm:large-d-limit-theorem}, except this time it is only
assumed that the ontological model reproduces quantum probabilities
to within some additive error $\epsilon\in(0,1]$. It will also be
necessary to define the tri-partite symmetric overlap between three
probability distributions $\mu,\nu,\chi$ \citep{Leifer14b}
\begin{equation}
\omega(\mu,\nu,\chi)\eqdef\int_{\Lambda}\mathrm{d}\lambda\,\min\{\mu(\lambda),\nu(\lambda),\chi(\lambda)\}.\label{eq:tripartite-symmetric-overlap}
\end{equation}

Consider any pair of preparation distributions $\mu\in\Delta_{|\psi\rangle}$,
$\nu\in\Delta_{|0\rangle}$. From Thm.~\ref{thm:large-d-limit-theorem}
it is known that there exist $U_{i}\in\mathcal{S}_{|\psi\rangle}$
such that $U_{i}|0\rangle=|\phi_{i}\rangle$. For each $U_{i}$ consider
any corresponding stochastic map $\gamma_{i}\in\Gamma_{U_{i}}$. By
assumption, preparations of $|\psi\rangle$ are non-contextual with
respect to stabiliser unitaries so each $\gamma_{i}$ maps $\mu$
to itself. Let each $\gamma_{i}$ map $\nu$ to some $\chi_{i}\in\Delta_{|\phi_{i}\rangle}$.
For notational convenience, let $|\phi_{0}\rangle\eqdef|0\rangle$
and $\chi_{0}\eqdef\nu$ then define the sets $\tilde{I}\eqdef\{3,...,d-1\}$
and $I\eqdef\{0\}\cup\tilde{I}$. By Eq.~(\ref{eq:symmetric-monotonicity})
it is therefore seen that
\begin{equation}
\omega(\mu,\chi_{i})\geq\omega(\mu,\nu),\quad\forall i\in\tilde{I}.\label{eq:noisy-prep-noncontextual}
\end{equation}

Consider a preparation of $|\psi\rangle$ via $\mu$, followed by
a measurement $M$ in the basis $\mathcal{B}^{\prime}$. Similarly
to Thm.~\ref{thm:large-d-limit-theorem}, the aim is to bound $\omega(\mu,\nu)$
by considering the probability of obtaining either of the measurement
outcomes $|0\rangle$ or $|1^{\prime}\rangle$, given by
\begin{equation}
\mathbb{P}_{M}(|0\rangle\vee|1^{\prime}\rangle\,|\,\mu)\leq\alpha^{2}+\beta^{2}+\epsilon.\label{eq:thm4prob}
\end{equation}
The trick is to do this in such a way that all possible errors are
accounted for.

In order to link this quantum probability to symmetric overlaps, consider
the following subsets of $\Lambda$.
\begin{itemize}
\item For each $i\in I$ consider $\Omega_{i}\eqdef\{\lambda\in\Lambda\,:\,0<\mu(\lambda)\leq\chi_{i}(\lambda)\}$.
Roughly, $\Omega_{i}$ is the region of the overlap between $\mu$
and $\chi_{i}$ for which $\mu$ is smaller than $\chi_{i}$.
\item For each $i\in I$ consider $\Theta_{i}\eqdef\{\lambda\in\Lambda\,:\,0<\chi_{i}(\lambda)<\mu(\lambda);\,\forall j<i,\,\chi_{j}(\lambda)\leq\chi_{i}(\lambda);\,\forall j>i,\,0<\chi_{j}(\lambda)<\chi_{i}(\lambda)\}$.
Roughly, this is the region of the overlap between $\mu$ and $\chi_{i}$
for which $\chi_{i}$ is greater than all other $\chi_{j\neq i}$,
but smaller than $\mu$.
\item For each $i<j\in I$ consider $\Theta_{i}^{j}\eqdef\{\lambda\in\Lambda\,:\,0<\chi_{i}(\lambda)\leq\chi_{j}(\lambda);\,\chi_{i}(\lambda)<\mu(\lambda)\}$.
Roughly, this is the region of the tri-partite overlap of $\mu,\chi_{i},\chi_{j}$
in which $\chi_{i}$ is the minimum of the three.
\item Similarly, for each $i>j\in I$ consider $\Theta_{i}^{j}\eqdef\{\lambda\in\Lambda\,:\,0<\chi_{i}(\lambda)<\chi_{j}(\lambda);\,\chi_{i}(\lambda)<\mu(\lambda)\}$.
\item For every unequal pair $i,j\in I$, let $\Omega_{ij}=\Omega_{i}\cap\Omega_{j}$.
\end{itemize}
Note that these sets are defined to be disjoint, for $i\neq j$: $\Theta_{i}\cap\Omega_{j}=\Theta_{i}\cap\Theta_{j}=\Theta_{i}\cap\Theta_{i}^{j}=\emptyset$.

The point of these subsets is the way in which they relate to symmetric
overlaps. From the definitions of symmetric overlaps it is not difficult
to verify that
\begin{eqnarray}
\omega(\mu,\chi_{i}) & = & \int_{\Omega_{i}}{\rm d}\lambda\,\mu(\lambda)+\int_{\Theta_{i}\cup\left[\cup_{j\neq i}\Theta_{i}^{j}\right]}\mathrm{d}\lambda\,\chi_{i}(\lambda)\label{eq:noisy-thm-bipartite-symmetric}\\
\omega(\mu,\chi_{i},\chi_{j\neq i}) & = & \int_{\Omega_{ij}}\mathrm{d}\lambda\,\mu(\lambda)+\int_{\Theta_{i}^{j}}\mathrm{d}\lambda\,\chi_{i}(\lambda)+\int_{\Theta_{j}^{i}}\mathrm{d}\lambda\,\chi_{j}(\lambda).\label{eq:noisy-thm-tripartite-symmetric}
\end{eqnarray}

Proceed by separating the probability Eq.~(\ref{eq:thm4prob}) according
to subsets in which $\lambda$ may obtain: 
\begin{eqnarray}
\mathbb{P}_{M}(|0\rangle\vee|1^{\prime}\rangle\,|\,\mu) & \geq & \sum_{i\in I}\mathbb{P}_{M}(|0\rangle\vee|1^{\prime}\rangle,\lambda\in\Omega_{i}\,|\,\mu)\nonumber \\
 &  & +\sum_{i\in I}\mathbb{P}_{M}(|0\rangle\vee|1^{\prime}\rangle,\lambda\in\Theta_{i}\,|\,\mu)\nonumber \\
 &  & -\sum_{i,j<i}\mathbb{P}_{M}(|0\rangle\vee|1^{\prime}\rangle,\lambda\in\Omega_{ij}\,|\,\mu),\\
 & \geq & \sum_{i\in I}\mathbb{P}_{M}(|0\rangle\vee|1^{\prime}\rangle,\lambda\in\Omega_{i}\,|\,\mu)\nonumber \\
 &  & +\sum_{i\in I}\mathbb{P}_{M}(|0\rangle\vee|1^{\prime}\rangle,\lambda\in\Theta_{i}\,|\,\mu)\nonumber \\
 &  & -\sum_{i,j<i}\int_{\Omega_{ij}}\mathrm{d}\lambda\mu(\lambda).\label{eq:noisy-thm-analysis}
\end{eqnarray}
The final line follows simply because $\mathbb{P}_{M}(|0\rangle\vee|1^{\prime}\rangle,\lambda\in\Omega_{ij}|\mu)\leq\mathbb{P}_{M}(\lambda\in\Omega_{ij}|\mu)=\int_{\Omega_{ij}}\mathrm{d}\lambda\mu(\lambda)$.

For the $i=0$ term in the first line of Eq.~(\ref{eq:noisy-thm-analysis}),
define the function $\xi(\lambda)\eqdef1-\mathbb{P}_{M}(|0\rangle|\lambda)$
so that 
\begin{eqnarray}
\mathbb{P}_{M}(|0\rangle\vee|1^{\prime}\rangle,\lambda\in\Omega_{0}\,|\,\mu) & = & \int_{\Omega_{0}}\mathrm{d}\lambda\,\mu(\lambda)\left\{ \mathbb{P}_{M}(|0\rangle\,|\,\lambda)+\mathbb{P}_{M}(|1^{\prime}\rangle\,|\,\lambda)\right\} \\
 & \geq & \int_{\Omega_{0}}\mathrm{d}\lambda\,\mu(\lambda)\mathbb{P}_{M}(|0\rangle\,|\,\lambda)\\
 & = & \int_{\Omega_{0}}\mathrm{d}\lambda\,\mu(\lambda)-\int_{\Omega_{0}}\mathrm{d}\lambda\,\mu(\lambda)\xi(\lambda)\\
 & \geq & \int_{\Omega_{0}}\mathrm{d}\lambda\,\mu(\lambda)-\int_{\Omega_{0}}\mathrm{d}\lambda\,\nu(\lambda)\xi(\lambda).
\end{eqnarray}
This can be simplified by noting that, for any $\Omega\subseteq\Lambda$,
\begin{eqnarray}
\int_{\Omega}\mathrm{d}\nu(\lambda)\xi(\lambda) & = & \int_{\Omega}\mathrm{d}\lambda\,\nu(\lambda)-\int_{\Lambda}\mathrm{d}\lambda\,\nu(\lambda)\mathbb{P}_{M}(|0\rangle\,|\,\lambda)+\int_{\Lambda\backslash\Omega}\mathrm{d}\lambda\,\nu(\lambda)\mathbb{P}_{M}(|0\rangle\,|\,\lambda)\\
 & \leq & \int_{\Omega}\mathrm{d}\lambda\,\nu(\lambda)-1+\epsilon+\int_{\Lambda\backslash\Omega}\mathrm{d}\lambda\,\nu(\lambda)\\
 & = & \epsilon
\end{eqnarray}
so that
\begin{equation}
\mathbb{P}_{M}(|0\rangle\vee|1^{\prime}\rangle,\lambda\in\Omega_{0}\,|\,\mu)\geq\int_{\Omega_{0}}\mathrm{d}\lambda\,\mu(\lambda)-\epsilon.\label{eq:first-terms-i=00003D0}
\end{equation}

The $i\in\tilde{I}$ terms of the first line of Eq.~(\ref{eq:noisy-thm-analysis})
follow in a similar way. Define $\zeta_{i}(\lambda)\eqdef1-\mathbb{P}_{M}(|0\rangle|\lambda)-\mathbb{P}_{M}(|1^{\prime}\rangle|\lambda)-\mathbb{P}_{M}(|i^{\prime}\rangle|\lambda)$
so that 
\begin{eqnarray}
\mathbb{P}_{M}(|0\rangle\vee|1^{\prime}\rangle,\lambda\in\Omega_{i}\,|\,\mu) & = & \int_{\Omega_{i}}\mathrm{d}\lambda\,\mu(\lambda)\left\{ \mathbb{P}_{M}(|0\rangle\,|\,\lambda)+\mathbb{P}_{M}(|1^{\prime}\rangle\,|\,\lambda)\right\} \\
 & = & \int_{\Omega_{i}}\mathrm{d}\lambda\,\mu(\lambda)-\int_{\Omega_{i}}\mathrm{d}\lambda\,\mu(\lambda)\left\{ \zeta_{i}(\lambda)+\mathbb{P}_{M}(|i^{\prime}\rangle\,|\,\lambda)\right\} \\
 & \geq & \int_{\Omega_{i}}\mathrm{d}\lambda\,\mu(\lambda)-\int_{\Omega_{i}}\mathrm{d}\lambda\,\chi_{i}(\lambda)\zeta_{i}(\lambda)-\epsilon\\
 & \geq & \int_{\Omega_{i}}\mathrm{d}\lambda\,\mu(\lambda)-2\epsilon.\label{eq:first-terms-iinI}
\end{eqnarray}
Together, Eqs.~(\ref{eq:noisy-thm-analysis},\ref{eq:first-terms-i=00003D0},\ref{eq:first-terms-iinI})
produce
\begin{eqnarray}
\mathbb{P}_{M}(|0\rangle\vee|1^{\prime}\rangle\,|\,\mu) & \geq & \sum_{i\in I}\int_{\Omega_{i}}\mathrm{d}\lambda\,\mu(\lambda)-\sum_{i,j<i}\int_{\Omega_{ij}}\mathrm{d}\lambda\,\mu(\lambda)-(2d-5)\epsilon\nonumber \\
 &  & +\sum_{i\in I}\mathbb{P}_{M}(|0\rangle\vee|1^{\prime}\rangle,\lambda\in\Theta_{i}\,|\,\mu).\label{eq:noisy-thm-analysis-2}
\end{eqnarray}

The $i=0$ term of the second line of Eq.~(\ref{eq:noisy-thm-analysis-2})
can be bounded as follows
\begin{eqnarray}
\mathbb{P}_{M}(|0\rangle\vee|1^{\prime}\rangle,\lambda\in\Theta_{0}\,|\,\mu) & \geq & \int_{\Theta_{0}}\mathrm{d}\lambda\,\mu(\lambda)\mathbb{P}_{M}(|0\rangle\,|\,\lambda)\\
 & \geq & \int_{\Theta_{0}}\mathrm{d}\lambda\,\nu(\lambda)\mathbb{P}_{M}(|0\rangle\,|\,\lambda)\\
 & = & \int_{\Lambda}\mathrm{d}\lambda\,\nu(\lambda)\mathbb{P}_{M}(|0\rangle\,|\,\lambda)-\int_{\Lambda\backslash\Theta_{0}}\mathrm{d}\lambda\,\nu(\lambda)\mathbb{P}_{M}(|0\rangle\,|\,\lambda)\\
 & \geq & 1-\epsilon-\int_{\Lambda\backslash\Theta_{0}}\mathrm{d}\lambda\,\nu(\lambda)=\int_{\Theta_{0}}\mathrm{d}\lambda\,\nu(\lambda)-\epsilon.\label{eq:second-terms-i=00003D0}
\end{eqnarray}
The $i\in\tilde{I}$ terms of the second line of Eq.~(\ref{eq:noisy-thm-analysis-2})
can be similarly bounded
\begin{eqnarray}
\mathbb{P}_{M}(|0\rangle\vee|1^{\prime}\rangle,\lambda\in\Theta_{i}\,|\,\mu) & = & \int_{\Theta_{i}}\mathrm{d}\lambda\,\mu(\lambda)\left\{ \mathbb{P}_{M}(|0\rangle\,|\,\lambda)+\mathbb{P}_{M}(|1^{\prime}\rangle\,|\,\lambda)+\mathbb{P}_{M}(|i^{\prime}\rangle\,|\,\lambda)\right\} -\int_{\Theta_{i}}\mathrm{d}\lambda\,\mu(\lambda)\mathbb{P}_{M}(|i^{\prime}\rangle\,|\,\lambda)\\
 & \geq & \int_{\Theta_{i}}\mathrm{d}\lambda\,\chi_{i}(\lambda)\left\{ \mathbb{P}_{M}(|0\rangle\,|\,\lambda)+\mathbb{P}_{M}(|1^{\prime}\rangle\,|\,\lambda)+\mathbb{P}_{M}(|i^{\prime}\rangle\,|\,\lambda)\right\} -\epsilon\\
 & \geq & (1-\epsilon)-\int_{\Lambda\backslash\Theta_{i}}\mathrm{d}\lambda\,\chi_{i}(\lambda)-\epsilon=\int_{\Theta_{i}}\mathrm{d}\lambda\,\chi_{i}(\lambda)-2\epsilon.\label{eq:second-terms-iinI}
\end{eqnarray}
So now combining Eqs.~(\ref{eq:noisy-thm-analysis-2},\ref{eq:second-terms-i=00003D0},\ref{eq:second-terms-iinI})
it is found that
\begin{eqnarray}
\mathbb{P}_{M}(|0\rangle\vee|1^{\prime}\rangle\,|\,\mu) & \geq & \sum_{i\in I}\left\{ \int_{\Omega_{i}}\mathrm{d}\lambda\,\mu(\lambda)+\int_{\Theta_{i}}\mathrm{d}\lambda\,\chi_{i}(\lambda)\right\} \nonumber \\
 &  & -\sum_{i,j<i}\int_{\Omega_{ij}}\mathrm{d}\lambda\mu(\lambda)-2(2d-5)\epsilon.\label{eq:noisy-thm-analysis-3}
\end{eqnarray}

Equation~(\ref{eq:noisy-thm-analysis-3}) can be further reduced
by adding any negative quantity. For example, consider Boole's inequality
\begin{equation}
\sum_{i\in I}\int_{\cup_{j\neq i}\Theta_{i}^{j}}\mathrm{d}\lambda\,\chi_{i}(\lambda)-\sum_{i,j\neq i}\int_{\Theta_{i}^{j}}\mathrm{d}\lambda\,\chi_{i}(\lambda)\leq0.
\end{equation}
Therefore, Eq.~(\ref{eq:noisy-thm-analysis-3}) reduces to
\begin{eqnarray}
\mathbb{P}_{M}(|0\rangle\vee|1^{\prime}\rangle\,|\,\mu) & \geq & \sum_{i\in I}\left\{ \int_{\Omega_{i}}\mathrm{d}\lambda\,\mu(\lambda)+\int_{\Theta_{i}\cup\left[\cup_{j\neq i}\Theta_{i}^{j}\right]}\mathrm{d}\lambda\,\chi_{i}\left(\lambda\right)\right\} \nonumber \\
 &  & -\sum_{i,j<i}\int_{\Omega_{ij}}\mathrm{d}\lambda\mu(\lambda)-\sum_{i,j\neq i}\int_{\Theta_{i}^{j}}\mathrm{d}\lambda\,\chi_{i}(\lambda)\nonumber \\
 &  & -2(2d-5)\epsilon.\label{eq:noisy-thm-analysis-4}
\end{eqnarray}
This can be further simplified by noting
\begin{eqnarray}
\sum_{i,j\neq i}\int_{\Theta_{i}^{j}}\mathrm{d}\lambda\,\chi_{i}(\lambda) & = & \sum_{i,j<i}\int_{\Theta_{i}^{j}}\mathrm{d}\lambda\,\chi_{i}(\lambda)+\sum_{i,j>i}\int_{\Theta_{i}^{j}}\mathrm{d}\lambda\,\chi_{i}(\lambda)\\
 & = & \sum_{i,j<i}\int_{\Theta_{i}^{j}}\mathrm{d}\lambda\,\chi_{i}(\lambda)+\sum_{j,i<j}\int_{\Theta_{i}^{j}}\mathrm{d}\lambda\,\chi_{i}(\lambda)\\
 & = & \sum_{i,j<i}\left\{ \int_{\Theta_{i}^{j}}\mathrm{d}\lambda\,\chi_{i}(\lambda)+\int_{\Theta_{j}^{i}}\mathrm{d}\lambda\,\chi_{j}(\lambda)\right\} ,
\end{eqnarray}
so that Eq.~(\ref{eq:noisy-thm-analysis-4}) becomes
\begin{equation}
\mathbb{P}_{M}(|0\rangle\vee|1^{\prime}\rangle\,|\,\mu)\geq\sum_{i\in I}\omega(\mu,\chi_{i})-\sum_{i,j<i}\omega(\mu,\chi_{i},\chi_{j})-2(2d-5)\epsilon\label{eq:noisy-thm-analysis-5}
\end{equation}
having used Eqs.(\ref{eq:noisy-thm-bipartite-symmetric},\ref{eq:noisy-thm-tripartite-symmetric}).

As a final step, consider how the tripartite symmetric overlaps are
bounded by $\epsilon$. Consider the measurement $M^{\prime}=\{E_{\text{\textlnot\ensuremath{\psi}}},E_{\text{\textlnot}i},E_{\text{\textlnot}j}\}$
which anti-distinguishes $\{|\psi\rangle,|\phi_{i}\rangle|\phi_{j}\rangle\}$,
so that
\begin{eqnarray}
\int_{\Lambda}\mathrm{d}\mu(\lambda)\mathbb{P}_{M^{\prime}}(E_{\text{\textlnot}\psi}\,|\,\lambda) & \leq & \epsilon,\\
\int_{\Lambda}\mathrm{d}\chi_{i}(\lambda)\mathbb{P}_{M^{\prime}}(E_{\text{\textlnot}i}\,|\,\lambda) & \leq & \epsilon,\\
\int_{\Lambda}\mathrm{d}\chi_{j}(\lambda)\mathbb{P}_{M^{\prime}}(E_{\text{\textlnot}j}\,|\,\lambda) & \leq & \epsilon.
\end{eqnarray}
Conservation of probability requires that 
\begin{equation}
\mathbb{P}_{M^{\prime}}(E_{\text{\textlnot}\psi}\,|\,\lambda)+\mathbb{P}_{M^{\prime}}(E_{\text{\textlnot}i}\,|\,\lambda)+\mathbb{P}_{M^{\prime}}(E_{\text{\textlnot}j}\,|\,\lambda)=1
\end{equation}
for all $\lambda\in\Lambda$. Consider the tripartite symmetric overlap,
Eqs.~(\ref{eq:tripartite-symmetric-overlap},\ref{eq:noisy-thm-tripartite-symmetric}),
for $\mu,\chi_{i},\chi_{j}$. Then
\begin{eqnarray}
\omega(\mu,\chi_{i},\chi_{j}) & = & \int_{\Lambda}\mathrm{d}\lambda\,\min\{\mu(\lambda),\chi_{i}(\lambda),\chi_{j}(\lambda)\}\left\{ \mathbb{P}_{M^{\prime}}(E_{\text{\textlnot}\psi}\,|\,\lambda)+\mathbb{P}_{M^{\prime}}(E_{\text{\textlnot}i}\,|\,\lambda)+\mathbb{P}_{M^{\prime}}(E_{\text{\textlnot}j}\,|\,\lambda)\right\} \\
 & \leq & \int_{\Lambda}\mathrm{d}\mu(\lambda)\mathbb{P}_{M^{\prime}}(E_{\text{\textlnot}\psi}\,|\,\lambda)+\int_{\Lambda}\mathrm{d}\chi_{i}(\lambda)\mathbb{P}_{M^{\prime}}(E_{\text{\textlnot}i}\,|\,\lambda)+\int_{\Lambda}\mathrm{d}\chi_{j}(\lambda)\mathbb{P}_{M^{\prime}}(E_{\text{\textlnot}j}\,|\,\lambda)\\
 & \leq & 3\epsilon.\label{eq:tripartite-symmetric-bound}
\end{eqnarray}

Applying Eq.~(\ref{eq:tripartite-symmetric-bound}) to Eq.~(\ref{eq:noisy-thm-analysis-5}),
one finds that
\begin{eqnarray}
\mathbb{P}_{M}(|0\rangle\vee|1^{\prime}\rangle\,|\,\mu) & \geq & \sum_{i\in I}\omega(\mu,\chi_{i})-\frac{3}{2}(d-3)(d-2)\epsilon-2(2d-5)\epsilon\\
 & \geq & (d-2)\omega(\mu,\nu)-\frac{1}{2}(3d^{2}-7d-2)\epsilon\label{eq:thm4-total-bound}
\end{eqnarray}
having used Eq.~(\ref{eq:noisy-prep-noncontextual}). Combining Eqs.~(\ref{eq:thm4prob},\ref{eq:thm4-total-bound})
one obtains an upper bound on $\omega(\mu,\nu)$ for any $\mu\in\Delta_{|\psi\rangle},\nu\in\Delta_{|0\rangle}$,
which must be greater than or equal to the least upper bound, Eq.~(\ref{eq:quantum-state-symmetric}),
finally giving 
\begin{equation}
\omega(|0\rangle,|\psi\rangle)\leq\alpha^{2}\left(\frac{1+2\alpha}{d-2}\right)+\frac{(3d^{2}-7d)}{2(d-2)}\epsilon.
\end{equation}
This completes the proof.\qed

A note is in order about the tightness of this bound, assuming arbitrarily
small $\epsilon$. At $d=4$, this bound cannot improve upon that
of Eq.~(\ref{eq:symmetric-state-trivial-bound}) for any $\alpha^{2}\in(0,\frac{1}{4})$.
At $d=5$, an improvement is possible for some values of $\alpha$.
It is only for $d>5$ that this bound is capable of improving upon
Eq.~(\ref{eq:symmetric-state-trivial-bound}) for all values of $\alpha\in(0,\frac{1}{4})$.
This is because the theorem extends the methods of Thms.~\ref{thm:superpositions-are-real},\ref{thm:large-d-limit-theorem},
which are closely linked to the asymmetric overlap, to the symmetric
overlap. 

Clearly the error model used here is very simplistic: it has been
assumed that some $\epsilon>0$ can be used to bound the deviation
of all probabilities from the quantum predictions. Another source
of possible error is in the use of stabiliser unitaries for $|\psi\rangle$.
To obtain Eq.~(\ref{eq:thm4-total-bound}) one uses Eq.~(\ref{eq:noisy-prep-noncontextual})
which requires that the $\chi_{i}$ are obtained from $\mu$ by a
transformation implementing a stabiliser unitary. Any experiment would
also have to engage in the problem of how to account for errors in
the implementation of the stabiliser unitary.

It may be possible to improve on the error term above by more carefully
using higher Bonferroni inequalities \citep{Rohatgi11}, rather than
just Boole's inequality as used above. Considering quad-partite and
higher-order overlaps (rather than stopping at the tripartite overlap,
as done here) may also improve the error. Doing so may improve the
scaling with $d$.

\section{\label{sec:Justifying-Preparation-Non-conte}Justifying Preparation
Non-contextuality with respect to Stabiliser Unitaries}

The ontological models formalism combines fundamental objective ontology
and operational notions. The fundamental ontology is reflected in
the idea that ontic states represent actual states of affairs, independently
of any other theories an observer might use to describe the same system.
On the other hand, the only way to reason about this largely-unspecified
ontological level is operationally: how does it respond to preparations,
transformations, and measurements that we can actually perform?

An assumption of non-contextuality is an assumption about these operational
bridges between our capabilities and the ontology. The following argument
is designed to defend the idea that operational preparations of some
pure quantum state $|\psi\rangle\in\mathcal{P}(\mathcal{H})$ can
reasonably be assumed to be non-contextual with respect to stabiliser
unitaries $\mathcal{S}_{|\psi\rangle}$ of the same state.

Any specific operational method for preparing some state $|\psi\rangle\in\mathcal{P}(\mathcal{H})$
may be thought of as a black box which the system is fed into. When
the system is fed out of the box, it is promised that the box has
prepared the system in state $|\psi\rangle$ according to some specific
method. In terms of ontological models, any preparation distribution
$\mu\in\Delta_{|\psi\rangle}$ can be considers in terms of such a
box.

Suppose you design some experiment, which involves preparing $|\psi\rangle$
via $\mu$. Scientists implementing that experiment would obtain the
corresponding black box to be sure that the method corresponding to
$\mu$ is indeed used. Once prepared, the system will the need to
be presented to other pieces of apparatus. However, there will always
be variation in how the system is treated between preparation and
the action of any other apparatus, any amount of motion or passage
of time or other (seemingly innocuous) treatment amounts to applying
some unitary $U$ to the system. Each scientist will, no doubt, be
careful to ensure that the system is not disturbed from its preparation
state, so it can be safely assumed that any such $U$ is a stabiliser
unitary $U\in\mathcal{S}_{|\psi\rangle}$. However, the point remains
that some unknown $U\in\mathcal{S}_{|\psi\rangle}$ is inevitably
applied to the system after preparation via $\mu$, and this can never
be perfectly accounted for.

Therefore, to analyse the result of the experiment, one has to allow
for some unknown $U\in\mathcal{S}_{|\psi\rangle}$ to by applied (via
some unknown $\gamma\in\Gamma_{U}$) after preparation of $|\psi\rangle$
via $\mu$. On this minimally realistic operational level, an arbitrary
preparation distribution $\mu$ can never be prepared unscathed, one
has to account for the inevitable, unknown, subsequent stabiliser
unitary. The effective preparation distribution that one must therefore
use to describe the experiment has to be one that is non-contextual
with respect to such transformations, allowing the experiment to still
be analysed despite the application of an unknown $U\in\mathcal{S}_{|\psi\rangle}$.

One must be careful to only consider operational features that are
not, even in principle, impossible to reliably perform. Since the
sets of preparation distributions for any given quantum states are,
in the end, operational in character, one may safely restrict to preparation
distributions that satisfy certain sensible realistic requirements.
The above heuristic argument aims to establish pure state preparation
non-contextuality with respect to stabiliser unitaries as such a realistic
requirement.
\end{document}